\DeclareMathOperator*{\argmax}{argmax}
\newtheorem{theorem}{Theorem}
\newtheorem{corollary}{Corollary}
\newtheorem{example}{Example}
\newtheorem{lemma}{Lemma}
\newenvironment{proof}[1][Proof]{\noindent\textbf{#1.} }{\ \rule{0.5em}{0.5em}}
\def \indt{\mathbin{\vbox{\baselineskip=0pt\lineskip=0pt
  \moveright2.5pt\hbox{$\|$}
  \hrule height 0.2pt width 10pt}}}
\begin{document}

\title{Robust Analysis of Short Panels\thanks{This is a revision of the October 2023 CeMMAP working paper CWP20/23,  circulated under the title ``Identification analysis in models with unrestricted latent variables: Fixed effects and initial conditions." We are grateful for comments received at seminars at UCL, Duke, and the Montreal Econometrics Seminar, and at presentations at the 2023 Latin American Meeting of the Econometric Society in Bogota and the $34^{\text{th}}$ EC$^2$ Meeting at the University of Manchester in December 2023. Financial support from the UK Economic and Social Research Council through a grant (RES-589-28-0001) to the ESRC Centre for Microdata Methods and Practice (CeMMAP) is gratefully acknowledged.}}
\author{Andrew Chesher \\ UCL and CeMMAP\thanks{Corresponding author. Address: Department of Economics, University College London, Gower Street, London WC1E 6BT, United Kingdom. Email: andrew.chesher@ucl.ac.uk.} \and Adam M. Rosen \\ Duke University and CeMMAP\thanks{Address: Adam Rosen, Department of Economics, Duke University, 213 Social Sciences Box 90097, Durham, NC 27708; Email: adam.rosen@duke.edu.} \and Yuanqi Zhang \\ UCL and CeMMAP\thanks{Address: Department of Economics, University College London, Gower Street, London WC1E 6BT, United Kingdom. Email: uctpyqz@ucl.ac.uk.}}
\maketitle

\begin{abstract}
Many structural econometric models include latent variables on whose
probability distributions one may wish to place minimal restrictions.
Leading examples in panel data models are individual-specific variables
sometimes treated as \textquotedblleft fixed effects\textquotedblright\ and,
in dynamic models, initial conditions. This paper presents a generally
applicable method for characterizing sharp identified sets when models place
no restrictions on the probability distribution of certain latent variables
and no restrictions on their covariation with other variables. In our analysis latent variables on which restrictions are undesirable are removed, leading to econometric analysis robust to misspecification of restrictions on their distributions which are commonplace in the applied panel data literature.  Endogenous explanatory variables are easily accommodated. Examples of application to some static and dynamic binary, ordered and multiple discrete choice and censored panel data models are presented.
\end{abstract}

\section{Introduction}

This paper deals with models of processes delivering values of outcomes, $Y$%
, given values of exogenous variables, $Z$, and latent, that is unobserved,
variables $U$ and $V$. The models that are the focus of this paper all leave
the distribution of $V$ on its known support and its covariation with all
other variables completely unrestricted. By contrast, latent variable $U$
may be required to be, to some degree, independent of $Z$.

Leading examples of latent variables in structural econometric models
employed in practice on whose distribution one may not want to impose
restrictions are the individual-specific unobserved variables included in
many panel data models, sometimes called \textquotedblleft fixed
effects\textquotedblright\ and the historic values of outcomes dynamically
determined by a process, commonly called \textquotedblleft initial
conditions\textquotedblright .

The following example has both elements, a \textquotedblleft fixed
effect\textquotedblright , $C$, and an initial condition, $Y_{10}$.

\begin{example}
\label{ex 1}A dynamic binary response model specifies that for all $t\in
[T]\equiv\{1,...,T\}$,
\begin{equation}  \label{example eq}
Y_{1t}=\left\{ 
\begin{array}{lll}
1 & , & \alpha Y_{2t}+Z_{t}\beta +\gamma Y_{1t-1}+C+U_{t} \geq 0\text{,} \\ 
0 & , & \alpha Y_{2t}+Z_{t}\beta +\gamma Y_{1t-1}+C+U_{t} \leq 0\text{,}%
\end{array}%
\right.
\end{equation}%
with $Y_{1t}=0$ or $Y_{1t}=1$ permitted when both inequalities hold,%
\footnote{%
This is equivalent to the representation $Y_{1t}=1[\alpha Y_{2t}+Z_{t}\beta
+\gamma Y_{1t-1}+C+U_{t}>0]$ when the indicator function $1\left[a > b %
\right]$ takes the value $1$ if $a > b$, $0$ if $a < b$, and either value if 
$a=b$.} and $U 
\indt%
Z$ where%
\begin{equation*}
U\equiv (U_{1},\dots ,U_{T})\text{,}\qquad Z\equiv (Z_{1},\dots ,Z_{T})\text{%
.}
\end{equation*}%
Realizations of $(Y,Z)$ are observed where 
\begin{equation*}
Y=(Y_{11},\dots ,Y_{1T},Y_{21},\dots ,Y_{2T}),
\end{equation*}%
and $Z_{t}$ and $Y_{2t}$ may be vectors. If the value of $Y_{10}$ is
observed then unrestricted $V=C$, otherwise $V=(C,Y_{10})$. If $\alpha $ is
not restricted equal to zero there are endogenous explanatory variables.
\end{example}

In many cases found in practice in which $T$ is large, the value $V$ takes for each observational unit is identified. In this case econometric analysis can proceed placing no restrictions at all
on the distribution of $V$ and treating it as a parameter to be estimated. When $T$ is not large this is unattractive because there may be intolerable inaccuracy in the estimation of $V$ which may contaminate estimates of other parameters. Additionally
there is the incidental parameters problem set out in \cite{Neyman/Scott:48} and reviewed in \cite{Lancaster:00}.

Faced with this problem, for small $T$, most papers proceed to obtain information on
structural features by placing distributional restrictions on $V$. Section 
\ref{Section: literature} lists many examples. It is good to know
what knowledge of structural features can be obtained absent such
restrictions. That allows the force of distributional restrictions on $V$ to
be assessed and offers the possibility of detecting misspecification. This
paper shows how that knowledge can be obtained. The results given here open
the way to a relatively robust analysis of models like panel models with
fixed effects in which there are latent variables on which one desires to
place no distributional restrictions. 

This paper presents characterizations of identified sets of structures and
structural features in models admitting unobserved variables such as $V$
whose distribution is unrestricted. There can be endogenous explanatory
variables as in Example \ref{ex 1} when $\alpha \neq 0$. A model may be 
\emph{incomplete} in the sense that, given values of all observed and all
unobserved variables and a specification of parameter values and functional
forms, the model can deliver a nonsingleton set of values of outcomes.
Identified sets are characterized by systems of moment inequalities.
Estimation and inference can proceed using established econometric methods.

The strategy employed here removes unrestricted latent variables, $V$,
by projection.\footnote{Our eschewal of restrictions on the distribution of $V$ accords with the approach in \cite{Neyman/Scott:48} in which the elements of $V$ are treated as parameters, subject to no restrictions.} We derive, for each value of the observed variables, the set of values of unobserved $U$ compatible with that value. Values of $U$ in such a set are associated with alternative values of $V$. Typically a value of $U$ in such a set can deliver more than one value of $Y$.  So, on removing latent variables $V$, there remains an incomplete model.\footnote{If the model is incomplete before projection then different values of $V$ can deliver different sets of values of $Y$.} 

Identification analysis is conducted in the context of the Generalized Instrumental Variable (GIV) framework introduced in \cite{chesher2017generalized} in which probability distributions of such sets of values of $U$ induced by the observed distributions of outcomes are essential elements.

Section \ref{Section: literature} considers the relationship of this work to some other results in the literature. Section \ref{Section Identified sets} presents characterizations of identified sets of structures. Sections \ref{Section: linear panel model} to \ref{Section: Tobit panels} set out applications to linear panel models and to models of binary response
panels, ordered choice panels, multiple discrete choice panels, simultaneous binary outcome panels, and models of panels with censored continuous outcomes.

\section{Related literature\label{Section: literature}}

\cite{rasch1960probabilistic}, \cite{rasch1961general}, \cite%
{andersen1970asymptotic}, and \cite{chamberlain2010binary} study point
identifying static panel models (i.e. $\gamma =0$ in (\ref{example eq}))
with restrictions requiring $U_{1},...,U_{T}$ to be independent over time
and distributed independently of $Z$ and independently of the fixed effect
and each with logistic marginal distributions. Like all the papers referred
to in this section, except one paper which is noted, these models do not
admit endogenous explanatory variables.

In the linear panel data model with fixed effects, differencing across time
periods removes the fixed effect, delivering events whose probability of
occurrence can be known and is invariant with respect to changes in the
value of the fixed effect. Under suitable support restrictions this leads to
point identification. In nonlinear panel data models with fixed effects, this simple differencing strategy does not apply.  Nonetheless, in the Rasch-Andersen-Chamberlain set up, events whose probabilities of occurrence are invariant to changes in the value of the fixed effect are found. Under particular distributional
restrictions point identification results. More recent papers on nonlinear panel data models have taken a similar approach.

\cite{honore2000panel} study a dynamic model as in (\ref{example eq}) but
with no endogenous explanatory variable ($\alpha =0$) with the $U_{t}$'s
independent of the fixed effect, independent over time, distributed
independently of $Z$ and with logistic distributions. That paper also
studies a case in which the logistic distribution restriction is dropped and
a case with multinomial logit panels with latent variables $U$ independent
of the fixed effects and independent of $Z$. \cite{honore2019panel} extends
this work, studying multivariate dynamic panel data logit models with fixed
effects. Many papers, like these, invoke restrictions requiring
independence between $U_{t}$'s and the fixed effect conditional on some of
the other observable variables including \cite{honore2006bounds}, \cite{honore2021identification},\footnote{This paper considers models in which there are simultaneous equations in binary outcomes and so, endogenous explanatory variables.} \cite{Dobronyi/Gu/Kim:21}, \cite{honore2022dynamic}, \cite{Davezies/D'Haultfoeuille/Laage:22}, \cite{Kitazawa:22}, \cite{Bonhomme/Dano/Graham:23}, \cite{Dano:23}, \cite%
{Davezies/D'Haultfoeuille/Mugnier:23}, and \cite{honore2021dynamic}. Such
independence restrictions are not imposed here.\footnote{%
One approach in such settings, demonstrated by e.g. \cite{honore2022dynamic}
and \cite{honore2021dynamic}, is the functional differencing approach
developed in \cite{Bonhomme:12}. This however requires knowledge of the
distribution of $F_{Y|Z,C}$, which one does not have in models such as that
of Example 1 without knowledge of the joint distribution of $U$ and $C$.} This permits for example the $U_t$'s to exhibit heteroskedastic variation with observational-unit-specific fixed effects.

There are many papers studying panel models of binary outcomes and multiple
discrete choice under conditional \textit{stationarity} restrictions on the
distribution of the time varying latent variables introduced in \cite%
{manski1987semiparametric}. These papers include \cite%
{Chernozhukov/Fernandez-Val/Hahn/Newey:09}, \cite{shi2018estimating}, \cite%
{Gao/Li:20}, \cite{khan2021inference}, \cite{pakes2021unobserved}, \cite%
{pakes2022moment}, \cite{Dobronyi/Ouyang/Yang:23}, \cite{khan2023identification}, and \cite%
{mbakop2023identification}.

In all of these cases the stationarity restriction placed on time-varying
unobservable heterogeneity is required to hold conditional on the value of
the fixed effect and the observable exogenous variables, which restricts the
covariation of the fixed effect and $U$.\footnote{%
In the binary response specification (\ref{example eq}) conditional stationarity implies
that for all $z$, $F_{U_1|Z=z,C=c} = F_{U_2|Z=z,C=c}$ and $%
F_{U_1|Z=z,C=c^{\prime }} = F_{U_2|Z=z,C=c^{\prime }}$ for any $c,c^{\prime
} $, which restricts how the conditional distribution of $U$ can change with
values of the fixed effect $C$. As pointed out by \cite%
{Chernozhukov/Fernandez-Val/Hahn/Newey:09} the stationarity restriction $%
U_t|C,Z \overset{d}{=}U_1|C,Z$ for all $t$ is equivalent to $(U_t,C)|Z 
\overset{d}{=}(U_1,C)|Z$ for all $t$.} In contrast, the models considered in
this paper impose no restrictions on the covariation of the fixed effect
with any variable.

The only previous paper of which we are aware that provides partial identification analysis for discrete outcome panel data models absent restrictions on the covariation of the fixed effect with any other variables is \cite{aristodemou2021semiparametric}. That paper provides set-identifying moment inequalities in panel data models of binary response and ordered choice when the covariation of the fixed effects with other variables is unrestricted.  The results developed in this paper provide a rule-directed procedure for enumerating all events whose probability is invariant with respect to the value of unrestricted latent variables thereby delivering sharp set identification for these and other nonlinear panel data models.

Application of sharp set identification analysis to panel data models with censored outcomes is demonstrated in Section \ref{Section: Tobit panels}. Observable implications in the form of moment equalities are derived for such models with Tobit-type censoring at zero in both static and dynamic contexts in Honor{\'e} (1992, 1993),\nocite{Honore:92}\nocite{Honore:93} \cite{Honore/Hu:2002}, and \cite{Hu:2002}, all in models in which the $U_t$'s satisfy the conditional stationarity assumption that has also been used in discrete outcome panel models. The only previous paper of which we are aware that provides identification analysis for censored outcome panel models without restricting the covariation of the fixed effect with other variables is \cite{Khan/Ponomareva/Tamer:16} (KPT), which provides the sharp identified set for slope coefficient $\beta$ in a static two-period model in which $U_2 - U_1$ and $Z$ are independent. Extensions are provided to some specialized dynamic models with two periods of observations with an observed initial condition and inequality restrictions on parameters.  The analysis here additionally accommodates more periods, unobserved initial conditions, and endogenous explanatory variables.  Like KPT we allow the censoring value to vary and to be endogenous, nesting the classical case of fixed censoring found in Tobit models.


This paper presents a generally applicable approach to identification
analysis in a wide class of nonlinear panel data models in which there are distributionally unrestricted latent variables and gives examples of the results it produces. Most of our examples feature discrete outcomes, but the application to the censored outcome model in Section \ref{Section: Tobit panels} demonstrates that the analysis applies more broadly.

\section{\label{Section Identified sets}Identified sets}

First the notation employed in this paper is introduced.\medskip

\noindent \textbf{Notation}. \textit{Generically} $\mathcal{R}_{A}$\textit{\
denotes the support of random variable }$A$\textit{\ and }$L_{A|Z=z}$\textit{%
\ denotes a conditional probability distribution of random variable }$A$%
\textit{\ given }$Z=z$\textit{. }$L_{A|Z=z}(\mathcal{S})$\textit{\ is the
conditional probability }$A$\textit{\ takes a value in set }$\mathcal{S}$%
\textit{\ given }$Z=z$\textit{. }$\mathcal{L}_{A|Z}\equiv \{L_{A|A=z};z\in
R_{Z}\}$\textit{\ is the collection of conditional distributions delivered
by a joint distribution }$L_{AZ}$\textit{\ when the support of }$Z$\textit{\
is }$\mathcal{R}_{Z}$\textit{. }$A%
\indt%
B$ \textit{denotes }$A$\textit{\ and }$B$\textit{\ are independently
distributed.} \textit{Sets and set-valued random variables are expressed
using calligraphic font. Collections of sets are expressed using sans serif
font. }$%
\mathbb{R}
$ \textit{denotes the real line. The empty set is denoted $\emptyset $. For $T > 1$, notation $[T]$ denotes $\{1,...T\}$. For any random vectors $X_1,...,X_T$ notation $\Delta_{ts} X \equiv X_t - X_s$ is used throughout. }
\medskip

Variables $Y$ are endogenous outcomes, variables $Z$ are exogenous\footnote{In the sense that their values are not affected by the evolution of the process.} and variables $U$ and $V$ are latent variables. Random vectors $(Y,Z,U,V)$ are defined on a probability space $(\Omega ,\mathsf{L},\mathbb{P})$ endowed with the Borel sets on $\Omega $. The support of $(Y,Z,U,V)$ is a subset of a finite dimensional Euclidean space. The sampling process identifies $F_{YZ}$, equivalently the collection of conditional distributions $\mathcal{F}_{Y|Z}$ and $F_Z$, as occurs for example under random sampling of observational units. It is assumed throughout for ease of exposition that each observational unit delivers the same number of observations, but unbalanced panels are easily accommodated with some added notation.

Models place restrictions on a structural function $h:\mathcal{R}%
_{YZUV}\rightarrow 
\mathbb{R}
$ which specifies the combinations of these variables that can occur \textit{%
via} the following restriction.\footnote{%
In the case of (\ref{example eq}) a suitable $h$ function would be%
\begin{equation*}
h(Y,Z,U,V)=\dsum\limits_{t=1}^{T}\max \{0, (1 - 2 Y_{1t})\cdot(\alpha
Y_{2t}+Z_{t}\beta +\gamma Y_{1t-1}+C+U_{t})\} .
\end{equation*}%
with $V=(C,Y_{0})$.}%
\begin{equation*}
\mathbb{P}[h(Y,Z,U,V)=0]=1
\end{equation*}%
Models place restrictions on the conditional probability distributions of $U$
given $Z$ which are elements of a collection $\mathcal{G}_{U|Z}$. Coupled
pairs $(h,\mathcal{G}_{U|Z})$ are called \emph{structures}. A model $%
\mathcal{M}$ is a collection of structures that obey the restrictions
imposed a priori on the data generation process. This
paper provides sharp identification analysis of structures $(h,\mathcal{G}%
_{U|Z}) \in \mathcal{M}$ and functionals thereof given knowledge of $%
\mathcal{F}_{Y|Z}$.

The essential element of the models considered here is that they place \emph{%
no} restrictions on the marginal distribution of $V$ and \emph{no}
restrictions on the covariation of $V$ with $(Z,U)$.

This paper shows how the framework set out in \cite{chesher2017generalized}
(CR) can be used to study cases with unobserved variables whose distribution
and covariation with other variables is not subject to restrictions. The
support of any initial condition components of $V$ is assumed known and the
support of all ``fixed effect'' components of $V$ is assumed to be the
entirety of the Euclidean space in which it resides. It is straightforward
to generalize the analysis to cases in which the support of the fixed effect
is restricted.

For all characterizations of identified sets of values of the pair $(h,\mathcal{G}_{U|Z}) \in \mathcal{M}$ it is assumed that a priori restrictions on $\mathcal{G}_{U|Z}$ are such that $U|Z$ is restricted absolutely continuous with respect to Lebesgue measure almost surely. This renders the boundary of sets $\mathcal{U}^{\ast}(y,z;h)$ to be measure zero with respect to any distribution $G_{U|Z=z}$. It is convenient to define the structural function $h$ such that sets $\mathcal{U}^{\ast }(Y,Z;h)$ are closed almost surely in the usual Euclidean topology,
and we do so here, but this is of no substantive consequence and can be
relaxed.\footnote{With some care equivalent results could be obtained allowing for random open
sets and random closed sets, or by working with an alternative topology in
which the sets under consideration are closed, such as the discrete topology
when $\mathcal{R}_Y$ is discrete. One could also allow sets of values of
unobservables that deliver ``ties'' in the optimal choice of discrete
outcome with positive probability, and apply results of CR, with suitable
care.}

Taken together the restrictions set out above ensure that Restrictions A1 -
A6 of CR hold in the models considered, suitably modified to accommodate
unobservable variables $(U,V)$ with the distribution of $V$ unrestricted.%
\footnote{The latent variables $U$ in restrictions A1-A6 of CR should be taken to
include both the variables $U$ and $V$ of this paper. For completeness,
these restrictions, adapted to the present context, are collected in
Appendix \ref{Appendix: CR restrictions}.}

Theorem \ref{Outerset th 1} provides a characterization of the identified
set of structures, denoted $\mathcal{I}(\mathcal{M},\mathcal{F}_{Y|Z})$,
delivered by a model $\mathcal{M}$ and a collection of distributions, $%
\mathcal{F}_{Y|Z}$. This is the collection of distributions \emph{marginal}
with respect to $V$ obtained from some collection $\mathcal{F}_{YV|Z}$.

\begin{theorem}
\label{Outerset th 1}Let $\mathcal{R}_{V}$ denote the support of $V$. Define 
$\mathcal{U}^{\ast }(y,z;h)$ as follows. 
\begin{equation}  \label{Ustar set}
\mathcal{U}^{\ast }(y,z;h)\equiv \{u:\exists v\in \mathcal{R}_{V}\quad \text{%
such that}\quad h(y,z,u,v)=0\}
\end{equation}%
Let $\mathcal{F}_{Y|Z}$ be a collection of distributions whose members are
marginal distributions of the members of some collection of distributions $%
\mathcal{F}_{YV|Z}$. The set of structures $(h,\mathcal{G}_{U|Z})$
identified by model $\mathcal{M}$ and the collection of distributions $%
\mathcal{F}_{Y|Z}$ comprises all structures admitted by the model $\mathcal{M%
}$ such that for all $z\in \mathcal{R}_{Z}$, the probability distribution $%
G_{U|Z=z}\in \mathcal{G}_{U|Z}$ is selectionable with respect to the
conditional distribution of the random set $\mathcal{U}^{\ast }(Y,Z;h)$
delivered by the probability distribution $F_{Y|Z=z}\in \mathcal{F}_{Y|Z}$%
.\medskip
\end{theorem}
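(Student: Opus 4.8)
The plan is to obtain the characterization in two moves: first apply the Chesher–Rosen (CR) sharp-identification result to the \emph{combined} latent variable $(U,V)$, and then remove $V$ by marginalization, exploiting that the model restricts neither the distribution of $V$ nor its covariation with $(U,Z)$. Since, as noted above, Restrictions A1--A6 of CR hold for the model once the latent variable is taken to be $(U,V)$, the CR characterization applies directly to the relation $h(Y,Z,U,V)=0$: a structure $(h,\mathcal{G}_{UV|Z})$ lies in the identified set delivered by $\mathcal{F}_{Y|Z}$ if and only if it is admitted by the augmented model and, for every $z\in\mathcal{R}_{Z}$, the conditional law $G_{UV|Z=z}$ is selectionable with respect to the conditional distribution of the random set
\[
\mathcal{W}(Y,Z;h)\equiv\{(u,v)\in\mathcal{R}_{U}\times\mathcal{R}_{V}:h(Y,Z,u,v)=0\}
\]
delivered by $F_{Y|Z=z}$. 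The key observation is that $\mathcal{U}^{\ast}(Y,Z;h)$ is exactly the projection of $\mathcal{W}(Y,Z;h)$ onto the $u$-coordinate.

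The next step is to pass from restrictions on the joint law $\mathcal{G}_{UV|Z}$ to restrictions on the $U$-marginal alone. Because the model constrains only $G_{U|Z=z}\in\mathcal{G}_{U|Z}$ and leaves the $V$-components of $G_{UV|Z=z}$ entirely free, a structure $(h,\mathcal{G}_{U|Z})$ is admitted by $\mathcal{M}$ and compatible with $\mathcal{F}_{Y|Z}$ if and only if, for each $z$, there exists some completion to a joint law $G_{UV|Z=z}$ with $U$-marginal equal to $G_{U|Z=z}$ that is selectionable with respect to $\mathcal{W}(Y,Z;h)$. The theorem therefore reduces to the following equivalence, to be established for each $z$: such a selectionable completion exists if and only if $G_{U|Z=z}$ is itself selectionable with respect to $\mathcal{U}^{\ast}(Y,Z;h)$.

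I would prove this equivalence through the coupling characterization of selectionability. For the direction that is immediate, if $(U^{\star},V^{\star})$ is a selection of $\mathcal{W}(Y,Z;h)$ with $U$-marginal $G_{U|Z=z}$, then $h(Y,Z,U^{\star},V^{\star})=0$ almost surely forces $U^{\star}\in\mathcal{U}^{\ast}(Y,Z;h)$ almost surely, so $U^{\star}$ is a selection of the projection and $G_{U|Z=z}$ is selectionable with respect to $\mathcal{U}^{\ast}(Y,Z;h)$. For the reverse direction, which is the substantive one, a selection $U^{\star}$ of $\mathcal{U}^{\ast}(Y,Z;h)$ with law $G_{U|Z=z}$ satisfies $\{v\in\mathcal{R}_{V}:h(Y,Z,U^{\star},v)=0\}\neq\emptyset$ almost surely, and I must lift $U^{\star}$ to a pair $(U^{\star},V^{\star})\in\mathcal{W}(Y,Z;h)$ without disturbing the $U$-marginal; the construction is to take $V^{\star}$ to be a measurable selection of the nonempty, closed-valued correspondence $(Y,U^{\star})\mapsto\{v:h(Y,Z,U^{\star},v)=0\}$, after which the induced $G_{UV|Z=z}$ has the prescribed $U$-marginal and, $V$ being unrestricted, is admissible.

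The main obstacle is precisely this measurable-selection step in the reverse direction. I would discharge it with a measurable selection theorem (for instance Kuratowski--Ryll-Nardzewski, or the Jankov--von Neumann/Aumann selection theorem on the Borel space $\Omega$), whose hypotheses---joint measurability of $h$ together with closed, almost surely nonempty values of the correspondence---are supplied by the standing assumptions that $h$ is chosen so that the sets $\mathcal{U}^{\ast}(Y,Z;h)$, and hence $\mathcal{W}(Y,Z;h)$, are closed almost surely, and that $\Omega$ carries its Borel structure; the assumed absolute continuity of $U\mid Z$ renders the boundaries null, so ties on the boundary of $\mathcal{W}$ need no separate treatment. Equivalently, one could argue through Artstein's inequalities using the capacity-functional identity $T_{\mathcal{U}^{\ast}}(K)=T_{\mathcal{W}}(K\times\mathcal{R}_{V})$, but the coupling route makes the role of the freedom of $V$ most transparent. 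Assembling this equivalence with the CR characterization for $(U,V)$ yields the stated result, with the projected random set $\mathcal{U}^{\ast}(Y,Z;h)$ replacing the full set $\mathcal{W}(Y,Z;h)$.
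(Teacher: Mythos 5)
Your proof is correct, and it takes a genuinely different route from the paper's. The paper states a duality lemma between the projected level sets $\mathcal{Y}^{\ast}(z,u;h)$ and $\mathcal{U}^{\ast}(y,z;h)$ and then asserts that the proof ``proceeds as the proof of Theorem 2 in CR, replacing $U$ sets with $U^{\ast}$ sets'' --- that is, it treats the model after elimination of $V$ as an incomplete model in $(Y,Z,U)$ alone and reruns the CR argument at that level. You instead apply the CR characterization with the latent variable taken to be the pair $(U,V)$ and the joint level set $\mathcal{W}(Y,Z;h)$, and then prove that existence of a selectionable completion $G_{UV|Z=z}$ with prescribed $U$-marginal is equivalent to selectionability of $G_{U|Z=z}$ with respect to the projection $\mathcal{U}^{\ast}(Y,Z;h)$. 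The substantive content is the same, but your decomposition makes explicit the lifting step --- constructing $V^{\star}$ as a measurable selection of $\{v:h(Y,Z,U^{\star},v)=0\}$ --- which the paper leaves implicit in its appeal to the CR template; this is exactly where the unrestrictedness of the distribution of $V$ and of its covariation with $(U,Z)$ is consumed, and your appeal to a measurable selection theorem is the right way to discharge it (Jankov--von Neumann/Aumann suffices given only measurability of $h$ on a Borel space; Kuratowski--Ryll-Nardzewski would additionally need closed $v$-sections, which the paper does not assume, so the former is the safer citation). The paper's route is shorter and reuses its duality lemma to feed the CR machinery directly; yours isolates the marginalization argument and the measure-theoretic care it requires, which arguably makes the role of the ``no restrictions on $V$'' hypothesis more transparent.
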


Formally the Theorem defines the identified set of structures $(h,\mathcal{G}%
_{U|Z})$ as%
\begin{multline}
\mathcal{I}(\mathcal{M},\mathcal{F}_{Y|Z})\equiv \left\{ (h,\mathcal{G}%
_{U|Z})\in \mathcal{M}:G_{U|Z=z}\preceq \mathcal{U}^{\ast }(Y,Z;h)\right. \\
\left. \text{conditional on }Z=z\quad \text{a.e.}\quad z\in \mathcal{R}%
_{Z}\right\}\text{,}\label{Ustar selectionability}
\end{multline}
where, as in \cite{chesher2020generalized}, for any random variable $A$ with
distribution $F_{A}$ and random set $\mathcal{A}$, $F_{A}\preceq \mathcal{A}$
denotes that $F_{A}$ is selectionable with respect to the distribution of $%
\mathcal{A}$.\footnote{%
The probability distribution of random variable $A$ is selectionable with
respect to the probabilty distribution of random set $\mathcal{A}$ when
there exists (i) $\tilde{A}$ having the same distribution as $A$, and (ii) $%
\widetilde{\mathcal{A}}$ having the same distribution as $\mathcal{A}$, both
defined on the same probability space such that $\mathbb{P}[\tilde{A}\in 
\widetilde{\mathcal{A}}]=1$. See Definition 2 of Chesher and Rosen (2020).}%
\bigskip

The proof relies on the following Lemma.

\begin{lemma}
Define
\begin{equation}
\mathcal{Y}^{\ast }(z,u;h)\equiv \{y:\exists v\in \mathcal{R}_{V}\quad \text{%
such that}\quad h(y,z,u,v)=0\}\text{.}  \label{ystar set definition}
\end{equation}%
The sets $\mathcal{Y}^{\ast }(z.u;h)$ and $\mathcal{U}^{\ast }(y,z;h)$
possess the duality property%
\begin{equation*}
\forall z,y^{+},u^{+}\quad y^{+}\in \mathcal{Y}^{\ast }(z,u^{+};h)\iff
u^{+}\in \mathcal{U}^{\ast }(y^{+},z;h)\text{.}
\end{equation*}
\end{lemma}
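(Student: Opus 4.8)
The plan is to prove the biconditional by unpacking both sides directly from their definitions, since each is a statement about the existence of a compatible value of $V$ satisfying $h=0$. The two sets defined in (\ref{ystar set definition}) and (\ref{Ustar set}) are built from the same ternary relation on $(y,z,u)$ obtained by projecting out $v$, so the duality ought to be a matter of recognizing that both conditions are literally the same existential sentence.

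Concretely, I would fix arbitrary $z$, $y^{+}$, and $u^{+}$ and argue as follows. Suppose $y^{+}\in\mathcal{Y}^{\ast}(z,u^{+};h)$. By the definition in (\ref{ystar set definition}), this means there exists $v\in\mathcal{R}_{V}$ such that $h(y^{+},z,u^{+},v)=0$. But this is precisely the condition appearing in (\ref{Ustar set}) that places $u^{+}$ in $\mathcal{U}^{\ast}(y^{+},z;h)$, namely that there exists $v\in\mathcal{R}_{V}$ with $h(y^{+},z,u^{+},v)=0$. Hence $u^{+}\in\mathcal{U}^{\ast}(y^{+},z;h)$. The reverse implication is identical read in the other direction: membership $u^{+}\in\mathcal{U}^{\ast}(y^{+},z;h)$ furnishes a $v\in\mathcal{R}_{V}$ with $h(y^{+},z,u^{+},v)=0$, which is the defining condition for $y^{+}\in\mathcal{Y}^{\ast}(z,u^{+};h)$. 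Since $z$, $y^{+}$, and $u^{+}$ were arbitrary, the stated equivalence holds for all such triples.

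There is essentially no substantive obstacle here: the existential quantifier over $v\in\mathcal{R}_{V}$ is common to both set definitions, so the proof reduces to observing that the propositional content
\[
\exists v\in\mathcal{R}_{V}\ \text{such that}\ h(y^{+},z,u^{+},v)=0
\]
is symmetric in the roles played by $y^{+}$ and $u^{+}$ as arguments of $h$. The only point requiring a modicum of care is the quantifier bookkeeping, ensuring that the same witness $v$ serving one membership claim serves the other; but because it is the very same equation $h=0$ and the very same range $\mathcal{R}_{V}$ on both sides, a single $v$ witnesses both simultaneously. I would therefore present the argument as a short two-line chain of equivalences rather than as separate ``if'' and ``only if'' directions.
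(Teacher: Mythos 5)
Your proposal is correct and follows exactly the paper's own argument: both membership conditions unpack to the identical existential statement $\exists v\in\mathcal{R}_{V}$ with $h(y^{+},z,u^{+},v)=0$, so the equivalence is immediate. No gaps.
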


\begin{proof}
The result follows because%
\begin{equation*}
y^{+}\in \mathcal{Y}^{\ast }(z,u^{+};h)\iff \exists v\in \mathcal{R}%
_{V}\quad \text{such that}\quad h(y^{+},z,u^{+},v)=0
\end{equation*}%
\begin{equation*}
u^{+}\in \mathcal{U}^{\ast }(y^{+},z;h)\iff \exists v\in \mathcal{R}%
_{V}\quad \text{such that}\quad h(y^{+},z,u^{+},v)=0.
\end{equation*}
\end{proof}

The proof of Theorem \ref{Outerset th 1} above proceeds as the proof of
Theorem 2 in CR, replacing $U$ sets with $U^{\ast }$ sets.

The identified set of structures can be characterized as shown in Corollary %
\ref{Corollary: Artstein} using the characterization of selectionability
given in \cite{artstein1983distributions}, as in Corollary 2 of CR.

\begin{corollary}
\label{Corollary: Artstein}Let $\mathsf{F}(\mathcal{R}_{U})$ denote the
collection of closed sets on the support of $U$. The set of structures
identified by model $\mathcal{M}$ and the collection of distributions $\mathcal{F}_{Y|Z}$ is as follows. 
\begin{multline}  \label{Corollary 1 sharp set}
\mathcal{I}(\mathcal{M},\mathcal{F}_{Y|Z})\equiv \left\{ (h,\mathcal{G}%
_{U|Z})\in \mathcal{M}:\forall \mathcal{S}\in \mathsf{F}(\mathcal{R}%
_{U})\right. \\
\left. F_{Y|Z=z}(\{y:\mathcal{U}^{\ast }(y,z;h)\subseteq \mathcal{S}\})\leq
G_{U|Z=z}(\mathcal{S})\text{ a.e. }z\in \mathcal{R}_{Z}\right\} \text{.}
\end{multline}
\end{corollary}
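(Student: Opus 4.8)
The plan is to derive Corollary \ref{Corollary: Artstein} directly from the selectionability characterization of the identified set in Theorem \ref{Outerset th 1}, by invoking the result of \cite{artstein1983distributions} that translates selectionability into a system of containment inequalities. The starting point is equation (\ref{Ustar selectionability}), which states that $(h,\mathcal{G}_{U|Z}) \in \mathcal{I}(\mathcal{M},\mathcal{F}_{Y|Z})$ precisely when, for almost every $z \in \mathcal{R}_{Z}$, the conditional distribution $G_{U|Z=z}$ is selectionable with respect to the distribution of the random set $\mathcal{U}^{\ast}(Y,Z;h)$ induced by $F_{Y|Z=z}$. Artstein's theorem characterizes selectionability of a probability measure $\mu$ with respect to a random closed set $\mathcal{A}$ by the requirement that $\mu(\mathcal{S}) \geq \mathbb{P}[\mathcal{A} \subseteq \mathcal{S}]$ for every closed set $\mathcal{S}$ (equivalently, the containment functional of $\mathcal{A}$ is dominated by $\mu$). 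The crux is to apply this equivalence pointwise in $z$, with $\mu = G_{U|Z=z}$ and $\mathcal{A} = \mathcal{U}^{\ast}(Y,Z;h)$ conditional on $Z=z$.

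First I would fix $z \in \mathcal{R}_{Z}$ and identify the containment probability $\mathbb{P}[\mathcal{U}^{\ast}(Y,Z;h) \subseteq \mathcal{S} \mid Z=z]$ explicitly in terms of the observed distribution $F_{Y|Z=z}$. Because the randomness in the set $\mathcal{U}^{\ast}(Y,Z;h)$ enters only through $Y$ (with $Z=z$ held fixed), the event $\{\mathcal{U}^{\ast}(Y,z;h) \subseteq \mathcal{S}\}$ is determined by the realized value of $Y$, so its probability equals $F_{Y|Z=z}(\{y : \mathcal{U}^{\ast}(y,z;h) \subseteq \mathcal{S}\})$. Second, I would substitute this expression into Artstein's inequality, obtaining for each closed $\mathcal{S} \in \mathsf{F}(\mathcal{R}_{U})$ the condition
\begin{equation*}
F_{Y|Z=z}(\{y : \mathcal{U}^{\ast}(y,z;h) \subseteq \mathcal{S}\}) \leq G_{U|Z=z}(\mathcal{S}).
\end{equation*}
Ranging over all closed $\mathcal{S}$ and imposing this for almost every $z$ yields exactly the system (\ref{Corollary 1 sharp set}), and conversely any structure satisfying this system is selectionable at almost every $z$ and hence lies in $\mathcal{I}(\mathcal{M},\mathcal{F}_{Y|Z})$.

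The step I expect to require the most care is verifying that Artstein's theorem applies in the present conditional setting, which rests on the measurability and closedness of the random set $\mathcal{U}^{\ast}(Y,z;h)$. This is where the standing assumptions in the excerpt do their work: the support of $(Y,Z,U,V)$ lies in a finite-dimensional Euclidean space, $h$ has been defined so that the sets $\mathcal{U}^{\ast}(Y,Z;h)$ are closed almost surely in the usual Euclidean topology, and Restrictions A1--A6 of CR are in force. Together these place us squarely within the hypotheses under which Corollary 2 of CR was established, so the application of \cite{artstein1983distributions} is licensed. The absolute continuity of $U \mid Z$ with respect to Lebesgue measure, which renders the boundary of $\mathcal{U}^{\ast}(y,z;h)$ a $G_{U|Z=z}$-null set, further ensures that no ambiguity arises from boundary behavior when evaluating the containment functional. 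The remainder of the argument is the routine pointwise-in-$z$ application of the cited equivalence, mirroring exactly the passage from Theorem 2 to Corollary 2 in CR, and so need not be belabored.
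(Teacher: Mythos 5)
Your proposal is correct and follows exactly the route the paper takes: the paper derives the corollary from Theorem \ref{Outerset th 1} by invoking the selectionability characterization of \cite{artstein1983distributions} pointwise in $z$, as in Corollary 2 of CR, which is precisely your argument. The only difference is that you spell out the identification of the containment functional with $F_{Y|Z=z}(\{y:\mathcal{U}^{\ast}(y,z;h)\subseteq \mathcal{S}\})$ and the closedness/measurability prerequisites, which the paper leaves implicit.
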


\subsection*{\label{Section: Remarks}Remarks}

\begin{enumerate}
\item The probability $F_{Y|Z=z}(\{y:\mathcal{U}^{\ast }(y,z;h)\subseteq 
\mathcal{S}\})$ is the probability conditional on $Z=z$ of the occurrence of
a value of $Y$ that can \emph{only} occur when $U\in \mathcal{S}$. We will
refer to such a probability as a \textit{containment probability} and employ
the notation $\mathcal{A}(\mathcal{S},z;h)\equiv \{y:\mathcal{U}^{\ast
}(y,z;h)\subseteq \mathcal{S}\}$.

\item Because the inequalities defining $\mathcal{I}(\mathcal{M},\mathcal{F}%
_{Y|Z})$ only involve probabilities of events under which $U^{\ast }$ sets
are subsets of test sets, $\mathcal{S}$, the collection of test sets $%
\mathsf{F}(\mathcal{R}_{U})$ in the definition of $\mathcal{I}(\mathcal{M},%
\mathcal{F}_{Y|Z})$ can, for each $z\in \mathcal{R}_{Z}$, be replaced by the
collection of all unions of $U^{\ast }$ sets, 
\begin{equation}
\mathsf{U}^{\ast }(z;h)\equiv \left\{ \dbigcup\limits_{y\in \mathcal{Y}}%
\mathcal{U}^{\ast }(y,z;h):\mathcal{Y}\subseteq \mathcal{R}_{Y}\right\} .
\label{all unions definition}
\end{equation}

\item Let $\mathsf{Q}(z;h)$ be a core determining collection (CDC) of sets
such that if the inequality%
\begin{equation*}
F_{Y|Z=z}(\{y:\mathcal{U}^{\ast }(y,z;h)\subseteq \mathcal{S}\})\leq
G_{U|Z=z}(\mathcal{S})
\end{equation*}%
holds for all $\mathcal{S}\in \mathsf{Q}(z;h)$ then it holds for all $%
\mathcal{S}\in \mathsf{F}(\mathcal{R}_{U})$. The collection $\mathsf{U}%
^{\ast }(z;h)$ defined in (\ref{all unions definition}) is such a CDC for the specified value of $z$.

\begin{enumerate}
\item If disjoint $\mathcal{S}_{1}$ and $\mathcal{S}_{2}$ are members of a
CDC and $\mathcal{A}(\mathcal{S}_{1},z;h)\cap \mathcal{A}(\mathcal{S}%
_{2},z;h)=\emptyset $, which occurs for example when all $U^{\ast }$ sets
are connected sets, then $\mathcal{S}=\mathcal{S}_{1}\cup \mathcal{S}_{2}$
can be excluded from the CDC. Theorem 3 of CR applies and gives further
refinements.

\item If $\mathcal{S}_{1}$ and $\mathcal{S}_{2}$ are members of a CDC with $%
\mathcal{S}\equiv \mathcal{S}_{1}\cap \mathcal{S}_{2}$ and $\mathcal{S}%
_{1}\cup \mathcal{S}_{2}=\mathcal{R}_{U}$, and 
\begin{equation*}
\mathcal{A}(\mathcal{S}_{1},z;h)\cup \mathcal{A}(\mathcal{S}_{2},z;h)=%
\mathcal{R}_{Y}
\end{equation*}%
\begin{equation*}
\mathcal{A}(\mathcal{S}_{1},z;h)\cap \mathcal{A}(\mathcal{S}_{2},z;h)=%
\mathcal{A}(\mathcal{S},z;h)
\end{equation*}%
then $\mathcal{S}$ can be excluded from the CDC by results in \cite%
{Luo/Wang:16} and \cite{Ponomarev:22}.

\item In particular applications some members of a CDC need not be
considered because they deliver inequalities that are dominated by others.
\end{enumerate}

\item If there is additionally the restriction $U%
\indt%
Z$ then $\mathcal{G}_{U|Z}=\{G_{U}\}$ and there is the following
simplification.%
\begin{multline}
\mathcal{I}(\mathcal{M},\mathcal{F}_{Y|Z})\equiv \left\{ 
\begin{array}{c}
\mathstrut \\ 
\mathstrut \end{array}
(h,{G}_{U})\in \mathcal{M}:\quad \forall \mathcal{S}\in \mathsf{F}(\mathcal{R}_{U})\right. \\ \left. \sup_{z\in \mathcal{R}_{Z}}F_{Y|Z=z}(\{y:\mathcal{U}%
^{\ast }(y,z;h)\subseteq \mathcal{S}\})\leq G_{U}(\mathcal{S}) \begin{array}{c}
\mathstrut \\ 
\mathstrut \end{array} \right\}\text{.}
\end{multline}
Quantile and mean independence restrictions can be accommodated.

\item Sets $\mathcal{U}^{\ast}(y,z;h)$ compatible with realizations $(y,z)$ of observable variables can be obtained in a variety of ways. In many models $h(y,z,u,v)=0$ if and only if there exist functions $d_{t}(\cdot;h):\mathcal{R}_{YZUV} \rightarrow \mathbb{R}$ such that for all $t\in [T]$, $d_{t}(y,z,u,v)=0$. The generalized inverse of this mapping with respect to $v$ is
\begin{equation*}
\mathcal{D}_{t}(y,z,u;h)\equiv \{v:d_{t}(y,z,u,v)=0\}\text{,}
\end{equation*}%
point-valued when $d_{t}(y,z,u,v)$
is strictly monotone in scalar $v$ and more generally set-valued.

The $U^{\ast }$ sets can then be written as
\begin{equation*}
\mathcal{U}^{\ast }(y,z;h)=\left\{ u:\dbigcap\limits_{t\in [T]}
\mathcal{D}_{t}(y,z,u;h)\neq \emptyset \right\}\text{.}
\end{equation*}
In models in which outcomes $Y_{1t}$ are determined by a weakly monotone transformation of an index function that is linear in $v$, sets $\mathcal{D}_{t}(y,z,u)$ are characterized by linear equalities and inequalities.  Variables $v$ can then be analytically removed from these linear systems, for example by way of Fourier-Motzkin elimination, to obtain linear inequalities characterizing $\mathcal{U}^{\ast }(y,z;h)$.

This can be used both in the examples studied in this paper, in which scalar fixed effects enter additively in an index function, as well as in more general models that allow individual-specific coefficients in such index functions.

\item Many of our illustrative examples will employ the restriction that $U$
and $Z$ are fully independent, but the characterizations afforded by Theorem %
\ref{Outerset th 1} and Corollary \ref{Corollary: Artstein} allow for a much
wider variety of restrictions on the collection of conditional distributions $\mathcal{G}_{U|Z}$. For example, restrictions
could require that $U_{t}\mathbin{\vbox{\baselineskip=0pt\lineskip=0pt
  \moveright2.5pt\hbox{$\|$}
  \hrule height 0.2pt width 10pt}}(Z_{1},...,Z_{t})$ for all $t$, while
permitting dependence between $U_{t}$ and $Z_{s}$ for $s>t$, hence allowing
models that impose only weak exogeneity.

\item Identified sets of values of a structural feature, defined as a
functional, $\theta \left( (h,\mathcal{G}_{U|Z})\right) $, are obtained by
projection.
\begin{equation*}
\mathcal{I}_{\theta }(\mathcal{M},\mathcal{F}_{Y|Z})=\{\theta (\left( h,%
\mathcal{G}_{U|Z}\right) ):\left( h,\mathcal{G}_{U|Z}\right) \in \mathcal{I}(%
\mathcal{M},\mathcal{F}_{Y|Z})\}.
\end{equation*}%
An example of such a structural feature is a vector of coefficients
multiplying included exogenous variables in models in which $h$ is
parametrically specified with a linear index restriction.

\item \label{NumberedList: capcon}Outer sets for the projection of the
identified set of structures onto the space of structural functions can be
obtained. Impose the restriction $U%
\indt%
Z$ and let there be no further restrictions on $G_{U}$. All structures in $%
\mathcal{I}(\mathcal{M},\mathcal{F}_{Y|Z})$ satisfy the inequality%
\begin{equation*}
\sup_{z\in \mathcal{R}_{Z}}F_{Y|Z=z}(\{y:\mathcal{U}^{\ast }(y,z;h)\subseteq 
\mathcal{S}\})\leq G_{U}(\mathcal{S})
\end{equation*}%
and applying this with $\mathcal{S}$ replaced by its complement delivers 
\begin{equation*}
G_{U}(\mathcal{S})\leq \inf_{z\in \mathcal{R}_{Z}}F_{Y|Z=z}(\{y:\mathcal{U}%
^{\ast }(y,z;h)\cap \mathcal{S}\neq \emptyset \}).
\end{equation*}%
Let $\mathcal{H}(\mathcal{M})$ denote the set of structural functions
admitted by model $\mathcal{M}$. There is the following outer identified set
on the space of structural functions. 
\begin{multline}
\mathcal{I}_{h}(\mathcal{M},\mathcal{F}_{Y|Z})\equiv \left\{ 
\begin{array}{c}
\mathstrut \\ 
\mathstrut%
\end{array}%
h\in \mathcal{H}(\mathcal{M}):\quad \forall \mathcal{S}\in \mathsf{F}(\mathcal{R}_{U})\right. \\
\sup_{z\in \mathcal{R}_{Z}}F_{Y|Z=z}(\{y:\mathcal{U}^{\ast }(y,z;h)\subseteq 
\mathcal{S}\})\leq \\
\left. \inf_{z\in \mathcal{R}_{Z}}F_{Y|Z=z}(\{y:\mathcal{U}^{\ast
}(y,z;h)\cap \mathcal{S}\neq \emptyset \})\right\}  \label{outer eq}
\end{multline}

\item There is an alternative characterization of the identified set of
structures
\begin{multline}
\mathcal{I}(\mathcal{M},\mathcal{F}_{Y|Z})\equiv \left\{ (h,\mathcal{G}_{U|Z})\in \mathcal{M}:F_{Y|Z=z}\preceq \mathcal{Y}^{\ast }(z,U;h)\right. \\
\left. \text{conditional on }Z=z\quad \text{a.e.}\quad z\in \mathcal{R}_{Z},\right\}\label{Y selectionability}
\end{multline}
where the set $\mathcal{Y}^{\ast }(\cdot ,\cdot ;\cdot )$ is defined in (\ref%
{ystar set definition}).\footnote{See for example \cite{Beresteanu/Molchanov/Molinari:09} and \cite{Molinari:Handbook} and further references therein for such characterizations. Theorem 1 of CR implies equivalence of characterizations of the form (\ref{Ustar selectionability}) and (\ref{Y selectionability}).} Using the Artstein characterization of
selectionability this leads to the following representation
\begin{multline}
\mathcal{I}(\mathcal{M},\mathcal{F}_{Y|Z})\equiv \left\{ (h,\mathcal{G}%
_{U|Z})\in \mathcal{M}:\forall \mathcal{K}\in \mathsf{\ K}(\mathcal{R}%
_{Y})\right. \\
\left. G_{U|Z=z}(\left\{ u:\mathcal{Y}^{\ast }(z,u;h)\subseteq \mathcal{K}%
\right\} )\leq F_{Y|Z=z}(\mathcal{K})\quad \text{ a.e. }z\in \mathcal{R}%
_{Z}\right\}
\end{multline}%
where $\mathsf{\ K}(\mathcal{R}_{Y})$ is the collection of closed sets on
the support of $Y$. In many cases arising in econometrics in
which distributional restrictions are put on $U$ this is less convenient to
work with than the characterization (\ref{Corollary 1 sharp set}). $Y^{\ast
} $ sets for a dynamic binary response two period panel model are shown in
Table \ref{Table: ystarsets 2BRP} for the case in which the initial value $%
Y_{0}$ is observed and for the case in which it is not.
\end{enumerate}

Some examples of the application of these results are now presented.%
\footnote{%
The development of some of these results was done by exploiting the symbolic
computational power of \textsf{Mathematica}, \cite{Mathematica}.}

\section{\label{Section: linear panel model}Linear panel data model}

The approach set out in this paper delivers classical results when taken to
the simple linear panel data model. Consider the simplest case with two
periods of observation and the following model incorporating a conditional
mean independence restriction

\begin{equation*}
Y_{t}=\beta _{0}+\beta _{1}Z_{t}+V+U_{t}\text{, } \quad \mathbb{E}[U_t|Z]=0 {%
, }\quad t\in \{1,2\},
\end{equation*}%
where $Z_{1}$ and $Z_{2}$ are scalar, $Z\equiv (Z_{1},Z_{2})$, and $z\equiv
(z_{1},z_{2})$.\footnote{%
The function 
\begin{equation*}
\sum_{t=1}^{T}\left( Y_{t}-\left( \beta _{0}+Z_{t}\beta _{1}+V+U_{t}\right)
\right) ^{2}
\end{equation*}%
can serve as the function $h(Y,Z,U,V)$.}

The $Y^{\ast }$ and $U^{\ast }$ sets are as follows.%
\begin{equation*}
\mathcal{Y}^{\ast }(u,z;\beta )=\{(y_{1},y_{2}): y_{2}-y_{1} =\beta
_{1}\left( z_{2}-z_{1}\right) +u_{2}-u_{1}\}
\end{equation*}%
\begin{equation*}
\mathcal{U}^{\ast }(y,z;\beta )=\{(u_{1},u_{2}):u_{2}-u_{1}= y_{2}-y_{1}
-\beta _{1}\left( z_{2}-z_{1}\right) \}
\end{equation*}

Theorem 5 of CR delivers the result that the values of $\beta _{1},$ say $%
\beta _{1}^{+}$ in the identified set are all values such that zero is an
element of the Aumann expectation of the set $\mathcal{U}^{\ast }(Y,Z;\beta
_{1}^{+})$ conditional on $Z=z$ for all $z\in \mathcal{R}_{Z}$. The set $%
\mathcal{U}^{\ast }(Y,Z;\beta _{1})$ is singleton in this example, so the
Aumann expectation is simply the classical expectation of point-valued
random variables and there is%
\begin{equation*}
\mathbb{E}[\mathcal{U}^{\ast }(Y,Z;\beta _{1})|Z=z]=\mathbb{E}%
[Y_{2}-Y_{1}|Z=z]-\beta _{1}\left( z_{2}-z_{1}\right)
\end{equation*}%
which, set equal to zero, delivers the correspondence%
\begin{equation*}
\beta _{1}=\frac{\mathbb{E}[Y_{2}-Y_{1}|Z=z]}{\left( z_{2}-z_{1}\right) }
\end{equation*}%
which is point identifying as long as $z_{2}\neq z_{1}$.

Extension to $T>2$ and dynamic models is straightforward and need not be
rehearsed here. The point is that the general approach proposed here
delivers classical results.

However the approach will \emph{not} deliver the well-known point
identification result in binary response panel data models with logistic independently distributed time-varying latent variables because those models further
impose $U 
\mathbin{\vbox{\baselineskip=0pt\lineskip=0pt
  \moveright2.5pt\hbox{$\|$}
  \hrule height 0.2pt width 10pt}} V$.\footnote{%
See \cite{chamberlain2010binary}.} In this paper the covariation of $V$ with
all other variables is unrestricted.

\section{Binary response panel models}

\label{Section: binary response}

This section studies the dynamic binary response model of Example \ref{ex 1}
under a variety of restrictions. Only in the final Section \ref{Section:
general binary panel model} are models admitting endogenous explanatory
variables considered. Section \ref{Section: literature} lists many papers
that study binary response panel models with fixed effects. In all but one
previous paper known to us there is a restriction on the joint distribution
of the fixed effect and other variables such that the conditional
distribution of other variables given the fixed effect is subject to
restrictions. No such restrictions are imposed here. The one exception of which we are aware is \cite{aristodemou2021semiparametric}, in which bounds are provided for binary response panel data models with an observed initial condition. 

Section \ref{Section: 2 period dynamic panel} gives results for the two
period dynamic binary response model when the initial condition ($Y_{0}$) is
observed. This model is studied in \cite{aristodemou2021semiparametric}. \
Three period dynamic models with unobserved initial condition are studied in
Section \ref{Section: 3 period dynamic y0 unobserved}. Section \ref{Section:
general binary panel model} gives results for a general case in which there
may be endogenous explanatory variables. Extension to models with multiple
lagged dependent variables is straightforward.

Define $Y=(Y_{1},\dots ,Y_{T})$ and $Z$ and $U$ similarly.

\subsection{\label{Section: 2 period dynamic panel}Two period dynamic binary
response model, initial condition observed}

In the case considered in this section, $T=2$ and $Y_{0}$ is observed.
Define $\Delta u\equiv u_{2}-u_{1}$, $\Delta z\equiv z_{2}-z_{1}$, and $%
\theta =(\beta ^{\prime },\gamma )^{\prime }$.

The $U^{\ast }$ sets are as follows.
\begin{equation*}
\mathcal{U}^{\ast }(y,z,y_{0};\theta )=\left\{ 
\begin{array}{ccc}
\mathcal{R}_{U} & , & y=(0,0) \\ 
\left\{ u:\Delta u \geq -\Delta z\beta +y_{0}\gamma \right\} & , & y=(0,1)
\\ 
\left\{ u:\Delta u \leq -\Delta z\beta +(y_{0}-1)\gamma \right\} & , & 
y=(1,0) \\ 
\mathcal{R}_{U} & , & y=(1,1)%
\end{array}%
\right.
\end{equation*}%
Unions of these $U^{\ast }$ sets do not deliver additional informative
inequalities.\footnote{%
Unions are either disjoint or equal to the support of $U$ depending on the
sign of $\gamma$.}

Under the independence restriction $U%
\indt%
Z|Y_{0}$ the identified set of values of $(\theta ,G_{U|Y_{0}})$ comprises
those values such that the following inequalities hold for $y_{0}\in \{0,1\}$
and a.e. $z\in \mathcal{R}_{Z}$. 
\begin{equation*}
\mathbb{P}[Y=(0,1)|Z=z,Y_{0}=y_{0}]\leq G_{U|Y_{0}=y_{0}}\left( \left\{
u:\Delta u \geq -\Delta z\beta +y_{0}\gamma \right\} \right)
\end{equation*}%
\begin{equation*}
\mathbb{P}[Y=(1,0)|Z=z,Y_{0}=y_{0}]\leq G_{U|Y_{0}=y_{0}}\left( \left\{
u:\Delta u \leq -\Delta z\beta +(y_{0}-1)\gamma \right\} \right)
\end{equation*}%
These are the inequalities of Theorem 1 of \cite%
{aristodemou2021semiparametric}. Setting $\gamma =0$ with $U 
\mathbin{\vbox{\baselineskip=0pt\lineskip=0pt
  \moveright2.5pt\hbox{$\|$}
  \hrule height 0.2pt width 10pt}} Z$, dropping conditioning on $Y_0$,
delivers the inequalities defining the identified set in the two period
static binary response panel model.

Table \ref{Table: ystarsets 2BRP} shows the $Y^{\ast }$ sets for the two
period dynamic binary response panel model. The top half of the table shows
the sets for the case in which $Y_{0}$ is observed. The bottom part shows
the sets obtained when $Y_{0}$ is not observed.

Appendix \ref{Appendix: Profile Bounds} derives sharp bounds on $\theta$ absent any specification of the distribution of $U$ using the method set out in Remark 8 of Section \ref{Section Identified sets}.

\begin{table}[tbp] \centering%
\caption{$Y^{\ast}$ sets in the binary response two period panel.}\medskip 
\begin{tabular}{|c|c|c|}
\hline
$Y_{0}$ & $\mathcal{Y}$ & $\left\{ u:\mathcal{Y}^{\ast }(z,u;\theta )=%
\mathcal{Y}\right\} $ \\ \hline\hline
& $\left\{ (0,0),(1,1)\right\} $ & $\left\{ u:-\Delta z\beta
+(y_{0}-1)\gamma <\Delta u<-\Delta z\beta +y_{0}\gamma \right\} $ \\ 
\cline{2-3}
observed & $\left\{ (0,0),(0,1),(1,1)\right\} $ & $\left\{ u:\Delta u\geq
-\Delta z\beta +y_{0}\gamma \wedge \Delta u>-\Delta z\beta +\left(
y_{0}-1\right) \gamma \right\} $ \\ \cline{2-3}\cline{2-3}
& $\left\{ (0,0),(1,0),(1,1)\right\} $ & $\left\{ u:\Delta u\leq -\Delta
z\beta +\left( y_{0}-1\right) \gamma \wedge \Delta u<-\Delta z\beta
+y_{0}\gamma \right\} $ \\ \cline{2-3}
& $\mathcal{R}_{Y}$ & $\left\{ -\Delta z\beta +y_{0}\gamma \leq \Delta u\leq
-\Delta z\beta +(y_{0}-1)\gamma \right\} $ \\ \hline\hline
not & $\left\{ (0,0),(0,1),(1,1)\right\} $ & $\left\{ u:\Delta u>\max \left(
-\Delta z\beta -\gamma ,-\Delta z\beta \right) \right\} $ \\ \cline{2-3}
observed & $\left\{ (0,0),\left( 1,0\right) ,(1,1)\right\} $ & $\left\{
u:\Delta u<\min \left( -\Delta z\beta +\gamma ,-\Delta z\beta \right)
\right\} $ \\ \cline{2-3}
& $\mathcal{R}_{Y}$ & $\left\{ u:\min \left( -\Delta z\beta +\gamma ,-\Delta
z\beta \right) \leq \Delta u\leq \max \left( -\Delta z\beta -\gamma ,-\Delta
z\beta \right) \right\} $ \\ \hline
\end{tabular}%
\label{Table: ystarsets 2BRP}%
\end{table}%

\subsection{\label{Section: 3 period dynamic y0 unobserved}A three period
dynamic binary response model with the initial condition not observed}

For any $s,t \in [T]$ define $\Delta _{st}u\equiv u_{s}-u_{t}$, and $%
\Delta _{st}z\equiv z_{s}-z_{t}$. With $T=3$, and treating both $V$ and $%
Y_{0}$ as unobserved latent variables with unrestricted distributions the $%
U^{\ast }$ sets are as shown in Table \ref{Table: DBR3 ustar sets no Y0}.

\begin{table}[tbp] \centering%
\caption{$U^{\ast}$ sets in the dynamic binary response panel data model
with 3 periods and $Y_0$ not observed.}\medskip 
\begin{tabular}{|c|c|c|}
\hline
& $y$ & $\mathcal{U}^{\ast }(y,z;\theta )\quad $ \\ 
\hline\hline
1 & $(0,0,0)$ & $\mathcal{R}_{U}$ \\ \hline
2 & $(0,0,1)$ & $\{u:\left( \Delta _{31}u \geq -\Delta _{31}z\beta + \min \left(\gamma,0 \right) \right)
\wedge \left( \Delta _{32}u\geq-\Delta _{32}z\beta \right) \}$ \\ \hline
3 & $(0,1,0)$ & $\{u:\left( \Delta _{21}u\geq-\Delta _{21}z\beta + \min \left(\gamma,0 \right) \right)
\wedge \left( \Delta _{32}u\leq-\Delta _{32}z\beta -\gamma \right) \}$ \\ 
\hline
4 & $(0,1,1)$ & $\{u:\left( \Delta _{21}u\geq-\Delta _{21}z\beta + \min \left(\gamma,0 \right)  \right)
\wedge (\Delta _{31}u\geq-\Delta _{31}z\beta - \max \left(\gamma,0 \right) )\}$ \\ \hline
5 & $(1,0,0)$ & $\{u:\left( \Delta _{21}u\leq-\Delta _{21}z\beta - \min\left(\gamma,0 \right) \right)
\wedge \left( \Delta _{31}u\leq-\Delta _{31}z\beta +\max\left(\gamma,0 \right) \right) \}$ \\ 
\hline
6 & $(1,0,1)$ & $\{u:\left( \Delta _{21}u\leq-\Delta _{21}z\beta - \min\left(\gamma,0 \right)\right)
\wedge \left( \Delta _{32}u\geq-\Delta _{32}z\beta +\gamma \right) \}$ \\ 
\hline
7 & $(1,1,0)$ & $\{u:\left( \Delta _{31}u\leq-\Delta _{31}z\beta -\min\left(\gamma,0 \right)\right)
\wedge \left( \Delta _{32}u\leq-\Delta _{32}z\beta \right) \}$ \\ \hline
8 & $(1,1,1)$ & $\mathcal{R}_{U}$ \\ \hline
\end{tabular}%
\label{Table: DBR3 ustar sets no Y0}%
\end{table}%

For sets of values of $Y$, $\mathcal{T\subset R}_{Y}$, define functions 
\begin{equation}
\mathcal{S}(\mathcal{T},z;\theta )\equiv \dbigcup\limits_{y\in \mathcal{T}}%
\mathcal{U}^{\ast }(y,z;\theta )  \label{Sfunction}
\end{equation}
and\footnote{The set $\mathcal{T}$ can be a strict subset of $\mathcal{Y}(\mathcal{T},z;\theta )$. For example, this is the case when $\mathcal{T}$ contains two values of $Y$ and there is a third value of $Y$ such that its $U^{\ast }$ set is a subset of $\mathcal{S}(\mathcal{T},z;\theta )$ as in row 7 of Table \ref{Table:SBP3 inequalities}.}
\begin{equation}
\mathcal{Y}(\mathcal{T},z;\theta )\equiv \{y:\mathcal{U}^{\ast }(y,z;\theta
)\subseteq \mathcal{S}(\mathcal{T},z;\theta )\}.  \label{Yfunction}
\end{equation}%
The identified set of values of $\left( \beta ,\gamma ,\mathcal{G}_{U|Z=z}\right) $ comprises the values satisfying inequalities of the form
\begin{equation*}
\mathbb{P}[Y\in \mathcal{Y}(\mathcal{T},z;\theta )|Z=z]\leq G_{U|Z=z}(\mathcal{S}
(\mathcal{T},z;\theta ))\text{, a.e. } z\in \mathcal{R}_{Z}\text{,}
\end{equation*}
where the sets $\mathcal{Y}(\mathcal{T},z;\theta )$ and $\mathcal{T}$ are
shown in the first and second columns of Tables \ref{Table:SBP3 inequalities}, \ref{Table: DBP3 core determining inequalities gamma positive}, and \ref{Table: DBP3 inequalities gamma negative}, covering the cases in which $\gamma = 0$, $\gamma >0$, and $\gamma <0$, respectively. 

\subsection{\label{Section: general binary panel model}General dynamic
binary response panel models}

Consider now the general specification of a dynamic panel data model from
Example 1, allowing for endogeneity admitting $\alpha \neq 0$. This section illustrates application of our
identification analysis to such cases, also allowing for arbitrary finite $T$%
.\footnote{Here we impose $\mathcal{R}_C=\mathbb{R}$, as typically done in the
literature. Extension to cases in which $\mathcal{R}_C$ is a subset of $%
\mathbb{R}$ is straightforward.}

Define 
\begin{equation}
\mathcal{T}_{0}\equiv \left\{ t\in [T] :Y_{1t}=0\right\} 
\text{,}\qquad \mathcal{T}_{1}\equiv \left\{ t\in [T]
:Y_{1t}=1\right\} \text{,}  \label{set T choice definitions}
\end{equation}
denoting the sets of periods in which $Y_{1t}=0$ and $Y_{1t}=1$,
respectively. Let $\mathcal{Y}_{0}$ denote the set of values in which the
initial condition $Y_{10}$ is known to lie, with $\mathcal{Y}%
_{0}=\left\{ Y_{10}\right\} $ if the initial condition is observed and $%
\mathcal{Y}_{0}=\{0,1\}$ if the initial condition is not observed.

The set $\mathcal{U}^{\ast }(Y,Z;h)$ defined in (\ref{Ustar set}) in
this model can be written 
\begin{multline}  \label{dynamic binary U set}
\mathcal{U}^{\ast }(Y,Z;h)=\Bigl\{u\in \mathcal{R}_{U}:\exists Y_{10}\in \mathcal{%
Y}_{0}\text{ such that } \\
\max_{t\in \mathcal{T}_{0}} \{ Y_{2t}\alpha +Z_{t}\beta +Y_{1t-1}\gamma
+u_{t}\}\leq \min_{t\in \mathcal{T}_{1}}\{Y_{2t}\alpha +Z_{t}\beta
+Y_{1t-1}\gamma +u_{t}\}\Bigr\}\text{.}
\end{multline}
This is so because the constituent inequalities may be equivalently
expressed as 
\begin{equation*}
\underline{C}\leq \overline{C}
\end{equation*}%
where 
\begin{align*}
\underline{C}& \equiv \max_{t\in \mathcal{T}_{1}}\bigl\{-\left( Y_{2t}\alpha
+Z_{t}\beta +Y_{1t-1}\gamma +u_{t}\right)\bigr\} \text{,} \\
\overline{C}& \equiv \min_{t\in \mathcal{T}_{0}}\bigl\{-\left( Y_{2t}\alpha
+Z_{t}\beta +Y_{1t-1}\gamma +u_{t}\right)\bigr\} \text{.}
\end{align*}%
That $\underline{C}\leq \overline{C}$ for some $Y_{10}\in \mathcal{Y}_{0}$
guarantees there exist values $C\in \left[ \underline{C},\overline{C}\right] $
and $Y_{10}\in \mathcal{Y}_{0}$ such that (\ref{example eq}) holds.\footnote{%
The $\max$ and $\min$ operators applied to the empty set are defined to be $%
-\infty$ and $\infty$, respectively.}

Define $\theta =(\alpha ^{\prime },\beta ^{\prime },\gamma )^{\prime }$. For
any panel data model for a binary outcome as in (\ref{example eq}) with $%
U\sim G_{U}$ independent of $Z$, the identified set of values of $\left(
\theta ,G_{U}\right) $ are those pairs satisfying, for an appropriately
chosen collection\footnote{%
The collection of all unions of $U^{\ast }$ sets, $\mathsf{U}^{\ast }(z;h)$,
defined in (\ref{all unions definition}), will suffice. In practice there
may be unions in this collection which need not be considered because they
deliver redundant inequalities.} of sets $\mathcal{T}$, the inequalities 
\begin{equation*}
\mathbb{P}[Y\in \mathcal{Y}(\mathcal{T},z;\theta )|Z=z]\leq G_{U}(\mathcal{S}%
(\mathcal{T},z;\theta ))\text{, a.e. }z \in \mathcal{R}_Z \text{,}
\end{equation*}%
where the sets $\mathcal{S}(\mathcal{T},z;\theta )$ and $\mathcal{Y}(%
\mathcal{T},z;\theta )$ are as defined in (\ref{Sfunction}) and (\ref%
{Yfunction}).

This characterization applies for dynamic models and static models (for
which $\gamma =0$ is imposed), models allowing endogenous explanatory
variables (for which $\alpha \neq 0$ is permitted), and for arbitrary $T$.

\section{\label{Section: Multiple discrete choice}Static multiple discrete
choice panel models}

In this section multiple discrete choice panel models are considered.  The presence of the fixed effect renders the model incomplete as in the multiple discrete choice analysis of \cite{Chesher/Rosen/Smolinski:11MNLIV}. In that analysis incompleteness arose due to the inclusion of potentially endogenous explanatory variables. Analysis of a static panel model with $T=2$ periods is considered.

In a three-choice model with two periods there is
\begin{equation*}
Y_{t}=\argmax_{d}\{J_{dt}:d\in \{1,2,3\}\},\quad t\in \{1,2\}
\end{equation*}%
where the $J_{dt}$ terms are random utilities with parameters $\theta \equiv (\beta _{1}^{\prime },\beta _{2}^{\prime
})^{\prime }$ as follows.%
\begin{equation*}
J_{1t}\equiv Z_{t}\beta _{1}+V_{1}+U_{1t},\quad t\in \{1,2\}\text{,}
\end{equation*}%
\begin{equation*}
J_{2t}\equiv Z_{t}\beta _{2}+V_{2}+U_{2t},\quad t\in \{1,2\}\text{,}
\end{equation*}%
\begin{equation*}
J_{3t}\equiv U_{3t},\quad t\in \{1,2\}\text{.}
\end{equation*}%
The terms $V_{1}$ and $V_{2}$ are \textquotedblleft fixed
effects\textquotedblright\ whose distribution and covariation with other
variables is unrestricted.

Section \ref{Section: literature} lists many papers that study multiple
discrete panel models with fixed effects. In all studies of multiple
discrete choice panel data models known to us there are conditions imposed
on the joint distribution of fixed effects and other variables such that the
conditional distribution of other variables given the fixed effect is
subject to restriction. No such restrictions are imposed here.


The $U^{\ast }$ sets are shown in Table \ref{Table: MDC ustar sets} using notation $\Delta U_d \equiv U_{d2} - U_{d1}$.

\begin{table}[tbp] \centering%
\caption{$U^{\ast}$ sets  in the multiple discrete three choice two period panel
model.}\medskip 
\begin{tabular}{|c|c|c|}
\hline
& $y$ & $\mathcal{U}^{\ast }(y,z;\theta )$ \\ \hline\hline
1 & $(1,1)$ & $\mathcal{R}_{U}$ \\ \hline
2 & $(1,2)$ & $\{u:\Delta u_{2}-\Delta u_{1} \geq \Delta z\beta _{1}-\Delta
z\beta _{2}\}$ \\ \hline
3 & $(1,3)$ & $\{u:\Delta u_{1} - \Delta u_3 \leq -\Delta z\beta _{1}\}$ \\ \hline
4 & $(2,1)$ & $\{u:\Delta u_{2}-\Delta u_{1} \leq \Delta z\beta _{1}-\Delta
z\beta _{2}\}$ \\ \hline
5 & $(2,2)$ & $\mathcal{R}_{U}$ \\ \hline
6 & $(2,3)$ & $\{u:\Delta u_{2} - \Delta u_3 \leq -\Delta z\beta _{2}\}$ \\ \hline
7 & $(3,1)$ & $\{u:\Delta u_{1}-\Delta u_{3} \geq -\Delta z\beta _{1}\}$ \\ \hline
8 & $(3,2)$ & $\{u:\Delta u_{2} - \Delta u_3 \geq -\Delta z\beta _{2}\}$ \\ \hline
9 & $(3,3)$ & $\mathcal{R}_{U}$ \\ \hline
\end{tabular}%
\label{Table: MDC ustar sets}%
\end{table}%

The identified set of values of $\left( \theta ,G_{U}\right) $ are those
pairs satisfying, for all $z\in \mathcal{R}_{Z}$ the inequalities%
\begin{equation*}
\mathbb{P}[Y\in \mathcal{Y}(\mathcal{T},z;\theta )|Z=z]\leq G_{U|Z=z}(\mathcal{S}%
(\mathcal{T},z;\theta )), \quad \text{a.e. } z \in \mathcal{R}_Z\text{,}
\end{equation*}%
where the sets $\mathcal{S}(\mathcal{T},z;\theta )$ and $\mathcal{Y}(%
\mathcal{T},z;\theta )$ are as defined in (\ref{Sfunction}) and (\ref%
{Yfunction}) and the sets $\mathcal{T}$ and $\mathcal{Y}(\mathcal{T}
,z;\theta )$ are shown in Table \ref{Table: MDC inequalities using T sets}.
As we show for ordered choice panels in the following section, this
characterization can be generalized to allow arbitrary periods $T$ and
alternatives $\left\{1,...,K\right\}$, and can allow for dependence on
lagged choices. Endogenous covariates can be permitted as is done for cross
sectional multiple discrete choice in \cite{Chesher/Rosen/Smolinski:11MNLIV}.

\section{Ordered response panel models}

\label{Section: Ordered choice panel models} This section generalizes the
binary response models of Section \ref{Section: binary response} to models
in which the outcome is an ordered response variable. Section \ref%
{subsection: two period three choice ordered response} gives results for a
static two period model with three ordered outcomes. Section \ref%
{subsection: General ordered response panel model} then gives results for a
general ordered outcome model allowing an arbitrary finite number of ordered
outcomes, arbitrary periods, and dynamics.

\subsection{Two period ordered response panel models with three categories}

\label{subsection: two period three choice ordered response}

There are structural equations as follows. 
\begin{equation*}
Y_{t}=\left\{ 
\begin{array}{ccc}
0 & , & Z_{t}\beta +V+U_{t}\leq c_{1} \\ 
1 & , & c_{1}\leq Z_{t}\beta +V+U_{t}\leq c_{2} \\ 
2 & , & c_{2}\leq Z_{t}\beta +V+U_{t}%
\end{array}%
\right. ,\quad t\in \{1,2\}
\end{equation*}%
Let $Y=(Y_{1},Y_{2})$, $Z=(Z_{1},Z_{2})$, $U=(U_{1},U_{2})$. Let $\theta
=(\beta ^{\prime },c_{1},c_{2})^{\prime }$.\footnote{%
In some applications $c_{1}$ and $c_{2}$ can have known values.} There is
the restriction $U%
\indt%
Z$. This model is studied in \cite{aristodemou2021semiparametric} where, as
here, no restrictions are placed on the distribution of $V$ or on its
covariation with other variables.

\begin{table}[tbp] \centering%
\caption{$U^{\ast}$ sets  in the ordered response three category  two period panel
model.}\medskip 
\begin{tabular}{|c|c|c|}
\hline
& $y$ & $\mathcal{U}^{\ast }(y,z;\theta )$ \\ \hline\hline
1 & $(0,0)$ & $\mathcal{R}_{U}$ \\ \hline
2 & $(0,1)$ & $\mathcal{\{}u:\Delta u\geq -\Delta z\beta \}$ \\ \hline
3 & $(0,2)$ & $\{u:\Delta u\geq c_{2}-c_{1}-\Delta z\beta \}$ \\ \hline
4 & $(1,0)$ & $\mathcal{\{}u:\Delta u\leq -\Delta z\beta \}$ \\ \hline
5 & $(1,1)$ & $\mathcal{\{}u:\left( \Delta u\leq c_{2}-c_{1}-\Delta z\beta
\right) \wedge \left( \Delta u\geq c_{1}-c_{2}-\Delta z\beta \right) \}$ \\ 
\hline
6 & $(1,2)$ & $\mathcal{\{}u:\Delta u\geq -\Delta z\beta \}$ \\ \hline
7 & $(2,0)$ & $\mathcal{\{}u:\Delta u\leq c_{1}-c_{2}-\Delta z\beta \}$ \\ 
\hline
8 & $(2,1)$ & $\mathcal{\{}u:\Delta u\leq -\Delta z\beta \}$ \\ \hline
9 & $(2,2)$ & $\mathcal{R}_{U}$ \\ \hline
\end{tabular}%
\label{Table: OC3 ustar sets}%
\end{table}%

Define $\Delta u\equiv u_{2}-u_{1}$ and $\Delta z\equiv z_{2}-z_{1}$. The $%
U^{\ast }$ sets are shown in Table \ref{Table: OC3 ustar sets}. The
identified set of values of $(\theta ,G_{U})$ comprises the values
satisfying, for $z\in \mathcal{R}_{Z}$, $7$ inequalities of the form%
\begin{equation*}
\mathbb{P}[Y\in \mathcal{Y}|Z=z]\leq G_{U}(\mathcal{S})
\end{equation*}%
where $\mathcal{Y}$ and $\mathcal{S}$ are given in Table \ref{Table: OC3 inequalities}.

\begin{table}[tbp] \centering%
\caption{Sets $\mathcal{Y}$ and $\mathcal{S}$ in the inequalities defining
the identified set of values of $\beta$ and $G_U$ in the two period ordered response panel model with three categories.}%
\medskip 
\begin{tabular}{|c|c|c|}
\hline
& $\mathcal{Y}$ & $\mathcal{S}$ \\ \hline\hline
1 & $\left\{ (0,2)\right\} $ & $\{u:\Delta u \geq c_{2}-c_{1}-\Delta z\beta
\}$ \\ \hline
2 & $\left\{ (1,1)\right\} $ & $\{u:\left( \Delta u \leq c_{2}-c_{1}-\Delta
z\beta \right) \wedge \left( \Delta u \geq c_{1}-c_{2}-\Delta z\beta \right)
\}$ \\ \hline
3 & $\left\{ (2,0)\right\} $ & $\{u:\Delta u\leq c_{1}-c_{2}-\Delta z\beta
\} $ \\ \hline
4 & $\left\{ (0,1),(0,2),(1,2)\right\} $ & $\{u:\Delta u \geq -\Delta z\beta
\}$ \\ \hline
5 & $\left\{ (1,0),(2,0),(2,1)\right\} $ & $\{u:\Delta u \leq -\Delta z\beta
\}$ \\ \hline
6 & $\{(0,1),(0,2),(1,1),(1,2)\}$ & $\{u:\Delta u \geq c_{1}-c_{2}-\Delta
z\beta \}$ \\ \hline
7 & $\{(1,0),(1,1),(2,0),(2,1)\}$ & $\{u:\Delta u \leq c_{2}-c_{1}-\Delta
z\beta \}$ \\ \hline
\end{tabular}%
\label{Table: OC3 inequalities}%
\end{table}%

Theorem 5 of \cite{aristodemou2021semiparametric} delivers an outer set
using the inequalities 1, 2 and 3 in Table \ref{Table: OC3 inequalities} and
the inequalities:%
\begin{equation*}
\mathbb{P}(Y=(0,1)|Z=z]\leq G_{U}(\mathcal{\{}u:\Delta u>-\Delta z\beta \})
\end{equation*}%
and%
\begin{equation*}
\mathbb{P}(Y=(1,2)|Z=z]\leq G_{U}(\mathcal{\{}u:\Delta u>-\Delta z\beta \})
\end{equation*}%
which are implied by inequality 4, and 
\begin{equation*}
\mathbb{P}(Y=(1,0)|Z=z]\leq G_{U}(\mathcal{\{}u:\Delta u<-\Delta z\beta \})
\end{equation*}%
and%
\begin{equation*}
\mathbb{P}(Y=(2,1)|Z=z]\leq G_{U}(\mathcal{\{}u:\Delta u<-\Delta z\beta \})
\end{equation*}%
which are implied by inequality 5.

\subsection{General ordered response panel models}

\label{subsection: General ordered response panel model} Consider now a
general specification of an ordered response panel data model with $\mathcal{%
R}_Y = \left\{0,...,J\right\}$ and allowing for dynamics as in e.g. \cite%
{honore2021dynamic} in which for all $j \in \mathcal{R}_Y$: 
\begin{equation}  \label{general ordered response equation}
Y_t = j \implies c_j \leq Z_{t}\beta + \imath_t \gamma + V + U_{t} \leq
c_{j+1}\text{,}
\end{equation}
where $c_0 \equiv -\infty$, $c_{J+1} \equiv \infty$, and $\imath_t \equiv
\left(1\left[Y_{t-1}=0 \right],\ldots,1\left[Y_{t-1}=J \right]\right)$ with
each component of $\gamma$ encoding the impact of lagged $Y$ on $Y_t$.%
\footnote{%
It is straightforward to accommodate multiple lags.} Let $\tilde{Z}_t \equiv
(Z_t,\imath_t)$, $\tilde{\beta} \equiv (\beta^{\prime},
\gamma^{\prime})^{\prime} $, $Y \equiv (Y_1,...,Y_T)$, $Z \equiv
(Z_1,...,Z_T)$, $U\equiv(U_1,...,U_T)$. Let $\theta \equiv
(\beta^{\prime},\gamma^{\prime},c_1,...,c_J)^{\prime }$ denote parameters of
the structural function, restricted such that $c_1 < \dots < c_J$. The
initial condition $Y_0$ is assumed observed, but it is straightforward to
accommodate an unobserved initial condition as for the binary panel studied
in Section \ref{Section: 3 period dynamic y0 unobserved}. 

Sets $\mathcal{U}^{\ast}\left(Y,Z;h\right)$ are given by 
\begin{multline*}
\mathcal{U}^{\ast }(Y,Z;h)=\{ u\in \mathcal{R}_{U}:\forall s,t \in
[T], \\
u_t - u_s \leq c_{Y_t+1}-c_{Y_s}- (\tilde{Z}_t - \tilde{Z}_s)\tilde{\beta}
\} \text{,}
\end{multline*}
This is verified by noting that for all $u \in \mathcal{U}^{\ast }(Y,Z;h)$
we have that 
\begin{equation*}
\forall s,t \in [T], \qquad c_{Y_s} - \tilde{Z}_s \tilde{\beta} -
u_s \leq c_{Y_t+1} - \tilde{Z}_t \tilde{\beta} - u_t \text{,}
\end{equation*}
in turn implying the existence of $v$ such that 
\begin{equation*}
\forall s,t \in [T], \qquad c_{Y_s} - \tilde{Z}_s \tilde{\beta} -
u_s \leq v \leq c_{Y_t+1} - \tilde{Z}_t \tilde{\beta} - u_t \text{.}
\end{equation*}
For all such $u,v$ it follows that (\ref{general ordered response equation})
holds for all $t$ with $U=u$ and $V=v$.

When the independence restriction $U 
\mathbin{\vbox{\baselineskip=0pt\lineskip=0pt
  \moveright2.5pt\hbox{$\|$}
  \hrule height 0.2pt width 10pt}} Z$ is imposed, the identified set for $%
\left( \theta ,G_{U}\right) $ are those pairs satisfying 
\begin{equation*}
\mathbb{P}[Y\in \mathcal{Y}(\mathcal{T},z;\theta )|Z=z]\leq G_{U}(\mathcal{S}%
(\mathcal{T},z;\theta ))\text{, a.e. } z\in \mathcal{R}_Z
\end{equation*}
for an appropriately chosen collection of sets $\mathcal{T}$ where the sets $%
\mathcal{S}(\mathcal{T},z;\theta )$ and $\mathcal{Y}(\mathcal{T},z;\theta )$
are as defined in (\ref{Sfunction}) and (\ref{Yfunction}).\footnote{%
Once again the collection of all unions of $U^{\ast }$ sets, $\mathsf{U}%
^{\ast }(z;h)$, defined in (\ref{all unions definition}), will suffice, but
in practice some of these unions may not be necessary.} This
characterization can be generalized to allow for endogenous variables on the
right hand side of (\ref{general ordered response equation}) as done for
cross section analysis of ordered choice models in \cite%
{Chesher/Smolinski:12} and \cite{Chesher/Rosen/Siddique:23}. It is
straightforward to allow $G_{U|Z=z}$ to vary with $z$ by replacing $G_U$
with $G_{U|Z=z}$ in the inequality above, which then delivers an identified
set for pairs $\left( \theta ,\mathcal{G}_{U|Z}\right) $.

\section{\label{Section: simultaneous binary response}Simultaneous binary
response panel models}

There is the model%
\begin{eqnarray*}
Y_{1t} &=&1[\alpha _{1}Y_{2t}+Z_{t}\beta _{1}+V_{1}+U_{1t}\geq 0] \\
Y_{2t} &=&1[\alpha _{2}Y_{1t}+Z_{t}\beta _{2}+V_{2}+U_{2t}\geq 0]
\end{eqnarray*}%
with $t\in [T]$ and the independence restriction $(U_1,U_2)
\indt%
Z\equiv (Z_{1},\dots ,Z_{T})$ where for each $j \in  \{1,2\}$, $U_j \equiv (U_{j1},\dots ,U_{jT})$.\footnote{This strong exogeneity restriction can be relaxed.}

This is a simultaneous equations model with binary outcomes such as is found
in simultaneous firm entry applications\footnote{%
See for example \cite{tamer2003incomplete}.} and models of social
interactions, put into a panel context with \textquotedblleft fixed
effects\textquotedblright , constant through time, one for each outcome.

\cite{honore2021identification} study a restricted version of this model
with $\beta _{1}=\beta _{2}$, $\alpha _{1}=\alpha _{2}$ and $U$ and $V$
restricted to be independently distributed. No such restrictions are imposed
here.

Define $\theta \equiv (\alpha _{1},\alpha _{2},\beta _{1}^{\prime },\beta
_{2}^{\prime })^{\prime }$. The distribution of $V\equiv (V_{1},V_{2})$ and
the covariation of $V$ with other variables is unrestricted.

Consider the case with $T=2$ when $Y=(Y_{11},Y_{12},Y_{21},Y_{22})$.
Extension to more time periods and outcomes is straightforward.

Define $\Delta u_{1}\equiv u_{12}-u_{11}$, $\Delta u_{2}\equiv u_{22}-u_{21}$%
, $\Delta z\equiv z_{2}-z_{1}$. The $U^{\ast }$ sets, $\mathcal{U}^{\ast
}(y,z;\theta )$, are as shown in Table \ref{Table: SES ustar sets}.

\begin{table}[tbp] \centering%
\caption{$U^{\ast}$ sets in the simultaneous binary response two period
panel.}\medskip 
\begin{tabular}{|c|c|c|}
\hline
& $y$ & $\mathcal{U}^{\ast }(y,z;\theta )$ \\ \hline\hline
1 & $(0,0,0,0)$ & $\mathcal{R}_{U}$ \\ \hline
2 & $(0,0,0,1)$ & $\{u:\Delta u_{2} \geq -\Delta z\beta _{2}\}$ \\ \hline
3 & $(0,0,1,0)$ & $\{u:\Delta u_{2} \leq -\Delta z\beta _{2}\}$ \\ \hline
4 & $(0,0,1,1)$ & $\mathcal{R}_{U}$ \\ \hline
5 & $(0,1,0,0)$ & $\{u:\Delta u_{1}\geq -\Delta z\beta _{1}\}$ \\ \hline
6 & $(0,1,0,1)$ & $\{u:\left( \Delta u_{1} \geq -\Delta z\beta _{1}-\alpha
_{1}\right) \wedge \left( \Delta u_{2} \geq -\Delta z\beta _{2}-\alpha
_{2}\right) \}$ \\ \hline
7 & $(0,1,1,0)$ & $\{u:\left( \Delta u_{1} \geq -\Delta z\beta _{1}+\alpha
_{1}\right) \wedge \left( \Delta u_{2} \leq -\Delta z\beta _{2}-\alpha
_{2}\right) \}$ \\ \hline
8 & $(0,1,1,1)$ & $\{u:\Delta u_{1} \geq -\Delta z\beta _{1}\}$ \\ \hline
9 & $(1,0,0,0)$ & $\{u:\Delta u_{1} \leq -\Delta z\beta _{1}\}$ \\ \hline
10 & $(1,0,0,1)$ & $\{u:\left( \Delta u_{1} \leq -\Delta z\beta _{1}-\alpha
_{1}\right) \wedge \left( \Delta u_{2} \geq -\Delta z\beta _{2}+\alpha
_{2}\right) \}$ \\ \hline
11 & $(1,0,1,0)$ & $\{u:\left( \Delta u_{1} \leq -\Delta z\beta _{1}+\alpha
_{1}\right) \wedge \left( \Delta u_{2} \leq -\Delta z\beta _{2}+\alpha
_{2}\right) \}$ \\ \hline
12 & $(1,0,1,1)$ & $\{u:\Delta u_{1} \leq -\Delta z\beta _{1}\}$ \\ \hline
13 & $(1,1,0,0)$ & $\mathcal{R}_{U}$ \\ \hline
14 & $(1,1,0,1)$ & $\{u:\Delta u_{2} \geq -\Delta z\beta _{2}\}$ \\ \hline
15 & $(1,1,1,0)$ & $\{u:\Delta u_{2} \leq -\Delta z\beta _{2}\}$ \\ \hline
16 & $(1,1,1,1)$ & $\mathcal{R}_{U}$ \\ \hline
\end{tabular}%
\label{Table: SES ustar sets}%
\end{table}%

There are $12$ $U^{\ast }$ sets that are not equal to $\mathcal{R}_{U}$ and $%
4$ pairs of these $U^{\ast }$ sets are identical - for example $\mathcal{U}%
^{\ast }(0,0,0,1),z;\theta )=\mathcal{U}^{\ast }(1,1,0,1),z;\theta )$, so
there are unions of $8$ $U^{\ast }$ sets to be considered when calculating
the identified set, that is $254$ unions in total. Only $24$ of
these deliver inequalities that characterize the identified set of parameter
values, the remaining unions delivering redundant inequalities.

The configuration of the unions of these $U^{\ast }$ sets depends on the
signs of $\alpha _{1}$ and $\alpha _{2}$ and in practice there are likely to
be restrictions on these. For example in a simultaneous firm entry
application $\alpha _{1}\leq 0$ and $\alpha _{2}\leq 0$ would likely be
imposed and in a model of couple's choices of activity (e.g. cinema
attendance) $\alpha _{1}\geq 0$ and $\alpha _{2}\geq 0$.

Only the case with $\alpha _{1}\geq 0$ and $\alpha _{2}\geq 0$ is presented
here. In this case, among the $U^{\ast }$ sets only the sets $\mathcal{U}%
^{\ast }((0,1,0,1),z;\theta )$ and $\mathcal{U}^{\ast }((1,0,1,0),z;\theta )$
have a non-empty intersection.

The identified set of values of $\left( \theta ,G_{U}\right) $ are those
pairs satisfying, for all $z\in \mathcal{R}_{Z}$ the inequalities%
\begin{equation*}
\mathbb{P}[Y\in \mathcal{Y}(\mathcal{T},z;\theta )|Z=z]\leq G_{U}(\mathcal{S}%
(\mathcal{T},z;\theta ))
\end{equation*}%
where the sets $\mathcal{S}(\mathcal{T},z;\theta )$ and $\mathcal{Y}(%
\mathcal{T},z;\theta )$ are as defined in (\ref{Sfunction}) and (\ref%
{Yfunction}) and the sets $\mathcal{T}$ and $\mathcal{Y}(\mathcal{T}%
,z;\theta )$ are shown in Table \ref{Table:SES inequalities}.

\section{Censored outcome panels}\label{Section: Tobit panels}

In a panel model with a censored outcome there is the following.
\begin{equation*}
Y_{1t}=\max (\alpha Y_{2t}+Z_{t}\beta +\gamma Y_{1t-1}+C+U_{t},Y_{3t})\text{,}\quad t\in [T]\text{,}\quad C\in \mathcal{%
\mathbb{R}
}\text{,}\quad Y_{10} \in \mathcal{Y}_{10}\text{,}
\end{equation*}
with $V \equiv (C, Y_{10})$ and $Y_{3t}$ denoting a censoring threshold, such that the outcome variable $Y_{1t}$ takes the value of the index $\alpha Y_{2t}+Z_{t}\beta +\gamma Y_{1t-1}+C+U_{t}$ when it exceeds the censoring threshold, and otherwise takes the value $Y_{3t}$. The censoring indicator $W_t \equiv 1\left[ Y_{1t}=Y_{3t} \right]$ is observed.

As in the models studied in KPT, the censoring threshold $Y_{3t}$ can be endogenous, and it may be correlated with elements of $U$ and $V$.\footnote{It is not necessary for $Y_{3t}$ to be observed in periods without censoring.} As before, endogenous $Y_{2t}$ is permitted in models with $\alpha \neq 0$, as in cross-sectional Tobit models studied in \cite{Chesher/Kim/Rosen:23}. The set $\mathcal{Y}_{10}$ denotes the feasible set of values for the initial condition $Y_{10}$ given the observed variables.\footnote{So $\mathcal{Y}_{10}$ is the singleton $\{Y_{10}\}$ if the initial condition is observed, and would typically be its entire support if it is not observed. In a static model the analysis applies with $\gamma =0$ and $Y_{10}$ absent.}

Define 
\begin{equation*}
\mathcal{T}_{0} \equiv \{t\in [T]:W_{t}=0\}\text{,} \qquad \mathcal{T}_{1} \equiv \{t\in [T]:W_{t}=1\}\text{.}
\end{equation*}
Adopting the strategy for obtaining $U^{\ast}$ sets described in remark 5 of Section \ref{Section Identified sets} we can define
\begin{equation*}
\mathcal{D}_{t}(y,z,u;h)=\left\{ 
\begin{array}{ccc}
\{(c,y_{10}) \in \mathbb{R}\times \mathcal{Y}_{10} : c = y_{1t}-\alpha y_{2t}-z_{t}\beta -\gamma y_{1t-1} -u_{t} \} & , & t\in \mathcal{T}_{0} \\ 
\{(c,y_{10}) \in \mathbb{R}\times \mathcal{Y}_{10} : \alpha y_{2t} + z_{t}\beta +\gamma y_{1t-1} +c+u_{t}\leq y_{1t}\} & , & t\in \mathcal{T}_{1}
\end{array}%
\right. 
\end{equation*}
The $U^{\ast }$ sets are then as follows.
\begin{multline*}
\mathcal{U}^{\ast }(Y,Z;h)=\left\{ u: \exists Y_{10} \in \mathcal{Y}_{10} \text{ s.t. } \forall s,t \in \mathcal{T}_{0}\text{, } \Delta_{ts}u = \Delta_{ts}Y_1 -\alpha \Delta_{ts}Y_2 - \Delta_{ts}Z\beta - \gamma \Delta_{t-1,s-1}Y_1\right.  \\
\left. \wedge \text{ } \forall (s,t) \in \mathcal{T}_1\times \mathcal{T}_{0},\text{ } \Delta_{ts}u \geq \Delta_{ts} Y_{1} -\alpha \Delta_{ts}Y_2 -\Delta_{ts}Z\beta -\gamma\Delta_{t-1,s-1}Y_1 \right\}\text{,}
\end{multline*}
where $\Delta_{ts}U \equiv U_{t} - U_{s}$, $\Delta_{ts}Z \equiv Z_{t} - Z_{s}$, $\Delta_{ts}Y_1 \equiv Y_{1t} - Y_{1s}$ and $\Delta_{ts}Y_2 \equiv Y_{2t} - Y_{2s}$. 

Following the approach set out in Theorem \ref{Outerset th 1}, the identified set of values of $\left( \theta ,\mathcal{G}_{U|Z}\right) $, where $\theta\equiv\left(\alpha,\beta,\gamma \right)$ are those
satisfying the inequalities
\begin{equation*}
\mathbb{P}[Y\in \mathcal{Y}(\mathcal{T},z;\theta )|Z=z]\leq G_{U|Z=z}(\mathcal{S}%
(\mathcal{T},z;\theta ))\text{ a.e. } z \in \mathcal{R}_Z\text{,}
\end{equation*}%
for an appropriate selection of sets $\mathcal{T}$, where the sets $\mathcal{S}(\mathcal{T},z;\theta )$ and $\mathcal{Y}(\mathcal{T},z;\theta )$ are defined in   (\ref{Sfunction}) and (\ref{Yfunction}). The required selection can be characterized following the same steps taken in the models studied in prior sections. 

\section{Concluding remarks}

This paper delivers methods for producing identified sets when models admit
unobserved, latent, variables on which no distributional restrictions are
placed, opening the way to robust analysis of short panels. Examples found in econometric practice include models incorporating so-called fixed effects and initial conditions. Endogenous explanatory variables are easily
accommodated.

The identified sets delivered by the models in this paper that place no
restriction on the distribution of latent $V$ will contain the structures
identified by more restrictive models if the restrictions of those models
are satisfied by the process under study. The analysis set out here will show
how sensitive the findings obtained using that more restrictive model are to those additional restrictions. In some cases it may be found that estimation employing a point-identifying model delivers a structure outside an estimator of the identified set obtained using a less restrictive model of the type studied in this paper.
Such a finding would suggest the more restrictive model is misspecified.
Formal development of such specification tests may be of interest for future research.

\pagebreak 

\bibliographystyle{econometrica}
\bibliography{areference}

\pagebreak 
\begin{appendices}

\section{CR Restrictions A1-A6}

\label{Appendix: CR restrictions} This section collects restrictions from 
\cite{chesher2017generalized} adapted to the present setting with
unobservable variables $(U,V)$, which are imposed throughout the paper.

\noindent \textbf{Restriction A1}: $\left( Y,Z,U,V\right) $ are random
vectors defined on a probability space $\left( \Omega ,\mathsf{L},\mathbb{P}%
\right) $, endowed with the Borel sets on $\Omega $. The support of $\left(
Y,Z,U,V\right) $ is a subset of Euclidean space. $\square $

\noindent \textbf{Restriction A2}: A collection of conditional distributions%
\begin{equation*}
\mathcal{F}_{Y|Z}\equiv \left\{ F_{Y|Z}\left( \cdot |z\right) :z\in \mathcal{%
R}_{Z}\right\} \text{,}
\end{equation*}%
is identified by the sampling process, where for all $\mathcal{T}\subseteq 
\mathcal{R}_{Y|z}$, $F_{Y|Z}\left( \mathcal{T}|z\right) \equiv \mathbb{P}%
\left[ Y\in \mathcal{T}|z\right] $. $\square $

\noindent \textbf{Restriction A3:} There is an $\mathsf{L}$-measurable
function $h\left( \cdot ,\cdot ,\cdot, \cdot \right) :\mathcal{R}%
_{YZUV}\rightarrow \mathcal{%
\mathbb{R}
}$ such that%
\begin{equation*}
\mathbb{P}\left[ h\left( Y,Z,U\right) =0\right] =1\text{,}
\end{equation*}%
and there is a collection of conditional distributions%
\begin{equation*}
\mathcal{G}_{U|Z}\equiv \left\{ G_{U|Z}\left( \cdot |z\right) :z\in \mathcal{%
R}_{Z}\right\} \text{,}
\end{equation*}%
where for all $\mathcal{S}\subseteq \mathcal{R}_{U|z}$, $G_{U|Z}\left( 
\mathcal{S}|z\right) \equiv \mathbb{P}\left[ U\in \mathcal{S}|z\right] $. $%
\square $

\noindent \textbf{Restriction A4}: The pair $\left( h,\mathcal{G}%
_{U|Z}\right) $ belongs to a known set of admissible structures $\mathcal{M}$%
. $\square $

\noindent \textbf{Restriction A5}: $\mathcal{U}^{\ast}\left( Y,Z;h\right) $ 
is closed almost surely $\mathbb{P}\left[ \cdot |z\right] $, each $z\in 
\mathcal{R}_{Z}$. $\square $

\noindent \textbf{Restriction A6}: $\mathcal{Y}^{\ast}\left(Z,U;h\right)$ 
is closed almost surely $\mathbb{P}\left[ \cdot |z\right] $, each $z\in 
\mathcal{R}_{Z}$. $\square $\smallskip \pagebreak

\section{Sets $\mathcal{T}$ for sharp identified sets}

\label{Appendix: T tables} This section collects tables of $\mathcal{T}$ and 
$\mathcal{Y}(\mathcal{T},z;\theta )$ defined in (\ref{Yfunction}) as 
\begin{equation*}
\mathcal{Y}(\mathcal{T},z;\theta )\equiv \{y:\mathcal{U}^{\ast }(y,z;\theta
)\subseteq \mathcal{S}(\mathcal{T},z;\theta )\},
\end{equation*}
such that inequalities of the form 
\begin{equation*}
\mathbb{P}[Y\in \mathcal{Y}(\mathcal{T},z;\theta )|Z=z]\leq G_{U|Z=z}(%
\mathcal{S}(\mathcal{T},z;\theta ))
\end{equation*}
for all $\mathcal{T}$ listed characterize the identified set for $%
\left(\theta,\mathcal{G}_{U|Z} \right)$ in all examples covered in Sections %
\ref{Section: binary response}--\ref{Section: simultaneous binary response}.
Recall from (\ref{Sfunction}) the definition of $\mathcal{S}(\mathcal{T}%
,z;\theta )$:

\begin{equation*}
\mathcal{S}(\mathcal{T},z;\theta )\equiv \dbigcup\limits_{y\in \mathcal{T}}%
\mathcal{U}^{\ast }(y,z;\theta ) \text{.}
\end{equation*}

\begin{table}[tbp] \centering%
\caption{Sets $\mathcal{Y}(\mathcal{T},z;\theta )$ and $\mathcal{T}$ in the inequalities defining
the identified set of structures in the static binary response 3 period panel data model ($\gamma = 0$).}%
\medskip 
\begin{tabular}{|c|c|c|}
\hline
& $\mathcal{Y}(\mathcal{T},z;\theta )$ & $\mathcal{T}$ \\ \hline\hline
1 & $\left\{ (0,0,1)\right\} $ & $\left\{ (0,0,1)\right\} $ \\ \hline
2 & $\left\{ (0,1,0)\right\} $ & $\left\{ (0,1,0)\right\} $ \\ \hline
3 & $\left\{ (0,1,1)\right\} $ & $\left\{ (0,1,1)\right\} $ \\ \hline
4 & $\left\{ (1,0,0)\right\} $ & $\left\{ (1,0,0)\right\} $ \\ \hline
5 & $\left\{ (1,0,1)\right\} $ & $\left\{ (1,0,1)\right\} $ \\ \hline
6 & $\{(1,1,0)\}$ & $\{(1,1,0)\}$ \\ \hline
7 & $\{(0,0,1),(0,1,0),(0,1,1)\}$ & $\{(0,0,1),(0,1,0)\}$ \\ \hline
8 & $\{(0,0,1),(0,1,1)\}$ & $\{(0,0,1),(0,1,1)\}$ \\ \hline
9 & $\{(0,0,1),(1,0,0),(1,0,1)\}$ & $\{(0,0,1),(1,0,0)\}$ \\ \hline
10 & $\{(0,0,1),(1,0,1)\}$ & $\{(0,0,1),(1,0,1)\}$ \\ \hline
11 & $\{(0,1,0),(0,1,1)\}$ & $\{(0,1,0),(0,1,1)\}$ \\ \hline
12 & $\{(0,1,0),(1,0,0),(1,1,0)\}$ & $\{(0,1,0),(1,0,0)\}$ \\ \hline
13 & $\{(0,1,0),(1,1,0)\}$ & $\{(0,1,0),(1,1,0)\}$ \\ \hline
14 & $\{(0,0,1),(0,1,1),(1,0,1)\}$ & $\{(0,1,1),(1,0,1)\}$ \\ \hline
15 & $\{(0,1,0),(0,1,1),(1,1,0)\}$ & $\{(0,1,1),(1,1,0)\}$ \\ \hline
16 & $\{(1,0,0),(1,0,1)\}$ & $\{(1,0,0),(1,0,1)\}$ \\ \hline
17 & $\{(1,0,0),(1,1,0)\}$ & $\{(1,0,0),(1,1,0)\}$ \\ \hline
18 & $\{(1,0,0),(1,0,1),(1,1,0)\}$ & $\{(1,0,1),(1,1,0)\}$ \\ \hline
19 & $\{(0,0,1),(0,1,0),(0,1,1),(1,0,1)\}$ & $\{(0,0,1),(0,1,0),(1,0,1)\}$
\\ \hline
20 & $\{(0,0,1),(0,1,0),(0,1,1),(1,1,0)\}$ & $\{(0,0,1),(0,1,0),(1,1,0)\}$
\\ \hline
21 & $\{(0,0,1),(0,1,1),(1,0,0),(1,0,1)\}$ & $\{(0,0,1),(0,1,1),(1,0,0)\}$
\\ \hline
22 & $\{(0,0,1),(1,0,0),(1,0,1),(1,1,0)\}$ & $\{(0,0,1),(1,0,0),(1,1,0)\}$
\\ \hline
23 & $\{(0,1,0),(0,1,1),(1,0,0),(1,1,0)\}$ & $\{(0,1,0),(0,1,1),(1,0,0)\}$
\\ \hline
24 & $\{(0,1,0),(1,0,0),(1,0,1),(1,1,0)\}$ & $\{(0,1,0),(1,0,0),(1,0,1)\}$
\\ \hline
\end{tabular}%
\label{Table:SBP3 inequalities}%
\end{table}%

\begin{table}[tbp] \centering%
\caption{Sets $\mathcal{Y}(\mathcal{T},z;\theta )$ and $\mathcal{T}$ in the
core determining  inequalities defining the identified set of structures in
the dynamic binary response 3 period panel with $Y_0$ not observed and
$\gamma > 0$.}\medskip 
\begin{tabular}{|c|c|c|}
\hline
& $\mathcal{Y}(\mathcal{T},z;\theta )$ & $\mathcal{T}$ \\ \hline\hline
1 & $\left\{ (0,0,1)\right\} $ & $\left\{ (0,0,1)\right\} $ \\ \hline
2 & $\left\{ (0,1,0)\right\} $ & $\left\{ (0,1,0)\right\} $ \\ \hline
3 & $\left\{ (0,1,1)\right\} $ & $\left\{ (0,1,1)\right\} $ \\ \hline
4 & $\left\{ (1,0,0)\right\} $ & $\left\{ (1,0,0)\right\} $ \\ \hline
5 & $\left\{ (1,0,1)\right\} $ & $\left\{ (1,0,1)\right\} $ \\ \hline
6 & $\{(1,1,0)\}$ & $\{(1,1,0)\}$ \\ \hline
7 & $\{(0,0,1),(0,1,1)\}$ & $\{(0,0,1),(0,1,1)\}$ \\ \hline
8 & $\{(0,0,1),(1,0,0),(1,0,1)\}$ & $\{(0,0,1),(1,0,0)\}$ \\ \hline
9 & $\{(0,0,1),(1,0,1)\}$ & $\{(0,0,1),(1,0,1)\}$ \\ \hline
10 & $\{(0,1,0),(0,1,1)\}$ & $\{(0,1,0),(0,1,1)\}$ \\ \hline
11 & $\{(0,1,0),(1,1,0)\}$ & $\{(0,1,0),(1,1,0)\}$ \\ \hline
12 & $\{(0,1,0),(0,1,1),(1,1,0)\}$ & $\{(0,1,1),(1,1,0)\}$ \\ \hline
13 & $\{(1,0,0),(1,0,1)\}$ & $\{(1,0,0),(1,0,1)\}$ \\ \hline
14 & $\{(1,0,0),(1,1,0)\}$ & $\{(1,0,0),(1,1,0)\}$ \\ \hline
15 & $\{(0,0,1),(0,1,0),(0,1,1)\}$ & $\{(0,0,1),(0,1,0),(0,1,1)\}$ \\ \hline
16 & $\{(0,0,1),(0,1,1),(1,0,0),(1,0,1)\}$ & $\{(0,0,1),(0,1,1),(1,0,0)\}$
\\ \hline
17 & $\{(0,0,1),(0,1,1),(1,0,1)\}$ & $\{(0,0,1),(0,1,1),(1,0,1)\}$ \\ \hline
18 & $\{(0,0,1),(0,1,0),(0,1,1),(1,1,0)\}$ & $\{(0,0,1),(0,1,1),(1,1,0)\}$
\\ \hline
19 & $\{(0,0,1),(1,0,0),(1,0,1),(1,1,0)\}$ & $\{(0,0,1),(1,0,0),(1,1,0)\}$
\\ \hline
20 & $\{(0,1,0),(0,1,1),(1,0,0),(1,1,0)\}$ & $\{(0,1,0),(0,1,1),(1,0,0)\}$
\\ \hline
21 & $\{(0,1,0),(1,0,0),(1,1,0)\}$ & $\{(0,1,0),(1,0,0),(1,1,0)\}$ \\ \hline
22 & $\{(1,0,0),(1,0,1),(1,1,0)\}$ & $\{(1,0,0),(1,0,1),(1,1,0)\}$ \\ \hline
23 & $\{(0,0,1),(0,1,0),(0,1,1),(1,0,1)\}$ & $%
\{(0,0,1),(0,1,0),(0,1,1),(1,0,1)\}$ \\ \hline
24 & $\{(0,0,1),(0,1,0),(1,0,0),(1,0,1)(1,1,0)\}$ & $%
\{(0,0,1),(0,1,0),(1,0,0),(1,1,0)\}$ \\ \hline
25 & $\{(0,0,1),(0,1,0),(0,1,1),(1,0,1),(1,1,0)\}$ & $%
\{(0,0,1),(0,1,1),(1,0,1),(1,1,0)\}$ \\ \hline
26 & $\{(0,1,0),(1,0,0),(1,0,1),(1,1,0)\}$ & $%
\{(0,1,0),(1,0,0),(1,0,1),(1,1,0)\}$ \\ \hline
\end{tabular}%
\label{Table: DBP3 core determining inequalities gamma positive}%
\end{table}%

\begin{table}[tbp] \centering%
\caption{Sets $\mathcal{Y}(\mathcal{T},z;\theta )$ and $\mathcal{T}$ in the inequalities defining
the identified set of structures in the dynamic binary response 3 period panel with $Y_0$ not observed and $\gamma < 0$.}%
\medskip 
\begin{tabular}{|c|c|c|}
\hline
& $\mathcal{Y}(\mathcal{T},z;\theta )$ & $\mathcal{T}$ \\ \hline\hline
1 & $\left\{ (0,0,1)\right\} $ & $\left\{ (0,0,1)\right\} $ \\ \hline
2 & $\left\{ (0,1,0)\right\} $ & $\left\{ (0,1,0)\right\} $ \\ \hline
3 & $\left\{ (0,1,1)\right\} $ & $\left\{ (0,1,1)\right\} $ \\ \hline
4 & $\left\{ (1,0,0)\right\} $ & $\left\{ (1,0,0)\right\} $ \\ \hline
5 & $\left\{ (1,0,1)\right\} $ & $\left\{ (1,0,1)\right\} $ \\ \hline
6 & $\{(1,1,0)\}$ & $\{(1,1,0)\}$ \\ \hline
7 & $\{(0,0,1),(0,1,0),(0,1,1\}$ & $\{(0,0,1),(0,1,0)\}$ \\ \hline
8 & $\{(0,0,1),(0,1,1)\}$ & $\{(0,0,1),(0,1,1)\}$ \\ \hline
9 & $\{(0,0,1),(1,0,0)\}$ & $\{(0,0,1),(1,0,0)\}$ \\ \hline
10 & $\{(0,0,1),(1,0,1)\}$ & $\{(0,0,1),(1,0,1)\}$ \\ \hline
11 & $\{(0,1,0),(0,1,1)\}$ & $\{(0,1,0),(0,1,1)\}$ \\ \hline
12 & $\{(0,1,0),(1,0,0),(1,1,0)\}$ & $\{(0,1,0),(1,0,0)\}$ \\ \hline
13 & $\{(0,1,0),(1,0,1)\}$ & $\{(0,1,0),(1,0,1)\}$ \\ \hline
14 & $\{(0,1,0),(1,1,0)\}$ & $\{(0,1,0),(1,1,0)\}$ \\ \hline
15 & $\{(0,0,1),(0,1,1),(1,0,1)\}$ & $\{(0,1,1),(1,0,1)\}$ \\ \hline
16 & $\{(0,1,1),(1,1,0)\}$ & $\{(0,1,1),(1,1,0)\}$ \\ \hline
17 & $\{(1,0,0),(1,0,1)\}$ & $\{(1,0,0),(1,0,1)\}$ \\ \hline
18 & $\{(1,0,0),(1,1,0)\}$ & $\{(1,0,0),(1,1,0)\}$ \\ \hline
19 & $\{(1,0,0),(1,0,1),(1,1,0)\}$ & $\{(1,0,1),(1,1,0)\}$ \\ \hline
20 & $\{(0,0,1),(0,1,0),(0,1,1),(1,0,1)\}$ & $\{(0,0,1),(0,1,0),(1,0,1)\}$
\\ \hline
21 & $\{(0,0,1),(0,1,0),(0,1,1),(1,1,0)\}$ & $\{(0,0,1),(0,1,0),(1,1,0)\}$
\\ \hline
22 & $\{(0,0,1),(0,1,1),(1,0,0),(1,0,1)\}$ & $\{(0,0,1),(0,1,1),(1,0,0)\}$
\\ \hline
23 & $\{(0,0,1),(1,0,0),(1,0,1)\}$ & $\{(0,0,1),(1,0,0),(1,0,1)\}$ \\ \hline
24 & $\{(0,0,1),(1,0,0),(1,0,1),(1,1,0)$ & $\{(0,0,1),(1,0,0),(1,1,0)\}$ \\ 
\hline
25 & $\{(0,1,0),(0,1,1),(1,0,0),(1,1,0)\}$ & $\{(0,1,0),(0,1,1),(1,0,0)\}$
\\ \hline
26 & $\{(0,1,0),(0,1,1),(1,1,0)\}$ & $\{(0,1,0),(0,1,1),(1,1,0)\}$ \\ \hline
27 & $\{(0,1,0),(1,0,0),(1,0,1),(1,1,0)\}$ & $\{(0,1,0),(1,0,0),(1,0,1)\}$
\\ \hline
\end{tabular}%
\label{Table: DBP3 inequalities gamma negative}%
\end{table}%

\begin{table}[tbp] \centering%
\caption{Sets $\mathcal{Y}(\mathcal{T},z;\theta )$ and $\mathcal{T}$ in the inequalities defining
the identified set of structures in the 3 choice multiple discrete choice 2 period panel data model.}%
\medskip 
\begin{tabular}{|c|c|c|}
\hline
& $\mathcal{Y}(\mathcal{T},z;\theta )$ & $\mathcal{T}$ \\ \hline\hline
1 & $\left\{ (1,2)\right\} $ & $\left\{ (1,2)\right\} $ \\ \hline
2 & $\left\{ (1,3)\right\} $ & $\left\{ (1,3)\right\} $ \\ \hline
3 & $\left\{ (2,1)\right\} $ & $\left\{ (2,1)\right\} $ \\ \hline
4 & $\left\{ (2,3)\right\} $ & $\left\{ (2,3)\right\} $ \\ \hline
5 & $\left\{ (3,1)\right\} $ & $\left\{ (3,1)\right\} $ \\ \hline
6 & $\left\{ (3,2)\right\} $ & $\left\{ (3,2)\right\} $ \\ \hline\hline
7 & $\{(1,2),(1,3)\}$ & $\{(1,2),(1,3)\}$ \\ \hline
8 & $\{(1,2),(3,2)\}$ & $\{(1,2),(3,2)\}$ \\ \hline
9 & $\{(1,3),(2,3)\}$ & $\{(1,3),(2,3)\}$ \\ \hline
10 & $\{(2,1),(2,3)\}$ & $\{(2,1),(2,3)\}$ \\ \hline
11 & $\{(2,1),(3,1)\}$ & $\{(2,1),(3,1)\}$ \\ \hline
12 & $\{(3,1),(3,2)\}$ & $\{(3,1),(3,2)\}$ \\ \hline\hline
13 & $\{(1,2),(1,3),(2,3)\}$ & $\{(1,2),(2,3)\}$ \\ \hline
14 & $\{(1,2),(1,3),(3,2)\}$ & $\{(1,3),(3,2)\}$ \\ \hline
15 & $\{(1,2),(3,1),(3,2)\}$ & $\{(1,2),(3,1)\}$ \\ \hline
16 & $\{(1,3),(2,1),(2,3)\}$ & $\{(1,3),(2,1)\}$ \\ \hline
17 & $\{(2,1),(2,3),(3,1)\}$ & $\{(2,3),(3,1)\}$ \\ \hline
18 & $\{(2,1),(3,1),(3,2)\}$ & $\{(2,1),(3,2)\}$ \\ \hline
\end{tabular}%
\label{Table: MDC inequalities using T sets}%
\end{table}%

\begin{table}[tbp] \centering%
\caption{Sets $\mathcal{Y}(\mathcal{T},z;\theta )$ and $\mathcal{T}$ in the inequalities defining
the identified set of values of $\theta$  in the simultaneous binary response 2 period panel.}
\medskip 
\begin{tabular}{|c|c|c|}
\hline
& $\mathcal{Y}(\mathcal{T},z;\theta )$ & $\mathcal{T}$ \\ \hline
1 & $\left\{ (0,0,0,1),(1,0,0,1),(1,1,0,1)\right\} $ & $\left\{
(0,0,0,1)\right\} $ \\ \hline
2 & $\left\{ (0,0,1,0),(0,1,1,0),(1,1,1,0)\right\} $ & $\left\{
(0,0,1,0)\right\} $ \\ \hline
3 & $\left\{ (0,1,0,0),(0,1,1,0),(0,1,1,1)\right\} $ & $\left\{
(0,1,0,0)\right\} $ \\ \hline
4 & $\left\{ (1,0,0,0),(1,0,0,1),(1,0,1,1)\right\} $ & $\left\{
(1,0,0,0)\right\} $ \\ \hline
5 & $\left\{ (0,1,0,1)\right\} $ & $\left\{ (0,1,0,1)\right\} $ \\ \hline
6 & $\left\{ (0,1,1,0)\right\} $ & $\left\{ (0,1,1,0)\right\} $ \\ \hline
7 & $\left\{ (1,0,0,1)\right\} $ & $\left\{ (1,0,0,1)\right\} $ \\ \hline
8 & $\left\{ (1,0,1,0)\right\} $ & $\left\{ (1,0,1,0)\right\} $ \\ \hline
9 & $\left\{ 
\begin{array}{c}
(0,0,0,1),(0,1,0,0),(0,1,1,0), \\ 
(0,1,1,1),(1,0,0,1),(1,1,0,1)%
\end{array}%
\right\} $ & $\{(0,0,0,1),(0,1,0,0)\}$ \\ \hline
10 & $\left\{ 
\begin{array}{c}
(0,0,0,1),(1,0,0,0),(1,0,0,1), \\ 
(1,0,1,1),(1,1,0,1)%
\end{array}%
\right\} $ & $\{(0,0,0,1),(1,0,0,0)\}$ \\ \hline
11 & $\{(0,0,0,1),(0,1,0,1),(1,0,0,1),(1,1,0,1)\}$ & $\{(0,0,0,1),(0,1,0,1)%
\} $ \\ \hline
12 & $\left\{ 
\begin{array}{c}
(0,0,0,1),(1,0,0,0),(1,0,0,1), \\ 
(1,0,1,0),(1,0,1,1),(1,1,0,1)%
\end{array}%
\right\} $ & $\{(0,0,0,1),(1,0,1,0)\}$ \\ \hline
13 & $\left\{ 
\begin{array}{c}
(0,0,1,0),(0,1,0,0),(0,1,1,0), \\ 
(0,1,1,1),(1,1,1,0)%
\end{array}%
\right\} $ & $\{(0,0,1,0),(0,1,0,0)\}$ \\ \hline
14 & $\left\{ 
\begin{array}{c}
(0,0,1,0),(0,1,1,0),(1,0,0,0), \\ 
(1,0,0,1),(1,0,1,1),(1,1,1,0)%
\end{array}%
\right\} $ & $\{(0,0,1,0),(1,0,0,0)\}$ \\ \hline
15 & $\left\{ 
\begin{array}{c}
(0,0,1,0),(0,1,0,0),(0,1,0,1), \\ 
(0,1,1,0),(0,1,1,1),(1,1,1,0)%
\end{array}%
\right\} $ & $\{(0,0,1,0),(0,1,0,1)\}$ \\ \hline
16 & $\{(0,0,1,0),(0,1,1,0),(1,0,1,0),(1,1,1,0)\}$ & $\{(0,0,1,0),(1,0,1,0)%
\} $ \\ \hline
17 & $\{(0,1,0,0),(0,1,0,1),(0,1,1,0),(0,1,1,1)\}$ & $\{(0,1,0,0),(0,1,0,1)%
\} $ \\ \hline
18 & $\left\{ 
\begin{array}{c}
(0,0,1,0),(0,1,0,0),(0,1,1,0), \\ 
(0,1,1,1),(1,0,1,0),(1,1,1,0)%
\end{array}%
\right\} $ & $\{(0,1,0,0),(1,0,1,0)\}$ \\ \hline
19 & $\left\{ 
\begin{array}{c}
(0,0,0,1),(0,1,0,1),(1,0,0,0), \\ 
(1,0,0,1),(1,0,1,1),(1,1,0,1)%
\end{array}%
\right\} $ & $\{(1,0,0,0),(0,1,0,1)\}$ \\ \hline
20 & $\{(1,0,0,0),(1,0,0,1),(1,0,1,0),(1,0,1,1)\}$ & $\{(1,0,0,0),(1,0,1,0)%
\} $ \\ \hline
21 & $\{(0,1,0,1),(1,0,1,0)\}$ & $\{(0,1,0,1),(1,0,1,0)\}$ \\ \hline
22 & $\left\{ 
\begin{array}{c}
(0,0,0,1),(0,1,0,0),(0,1,0,1),(0,1,1,0), \\ 
(0,1,1,1),(1,0,0,1),(1,1,0,1)%
\end{array}%
\right\} $ & $\{(0,0,0,1),(0,1,0,0),(0,1,0,1)\}$ \\ \hline
23 & $\left\{ 
\begin{array}{c}
(0,0,0,1),(0,1,0,1),(1,0,0,0),(1,0,0,1), \\ 
(1,0,1,0),(1,0,1,1),(1,1,0,1)%
\end{array}%
\right\} $ & $\{(0,0,0,1),(0,1,0,1),(1,0,1,0)\}$ \\ \hline
24 & $\left\{ 
\begin{array}{c}
(0,0,1,0),(0,1,1,0),(1,0,0,0),(1,0,0,1), \\ 
(1,0,1,0),(1,0,1,1),(1,1,1,0)%
\end{array}%
\right\} $ & $\{(0,0,1,0),(1,0,0,0),(1,0,1,0)\}$ \\ \hline
25 & $\left\{ 
\begin{array}{c}
(0,0,1,0),(0,1,0,0),(0,1,0,1),(0,1,1,0), \\ 
(0,1,1,1),(1,0,1,0),(1,1,1,0)%
\end{array}%
\right\} $ & $\{(0,0,1,0),(0,1,0,1),(1,0,1,0)\}$ \\ \hline
\end{tabular}%
\label{Table:SES inequalities}%
\end{table}%
\pagebreak

\section{Bounds on structural parameters $\theta$}\label{Appendix: Profile Bounds}

Here calculation of the projection of the identified set of values
of $(\theta ,G_{U|Y_{0}})$ onto the space of $\theta $ as set out in Remark \ref{NumberedList: capcon} of section \ref{Section: Remarks} is demonstrated for the two-period dynamic binary reponse model studied in Section \ref{Section: 2 period dynamic panel}.

Define 
\begin{equation*}
p_{jk}\left(z,y_0\right)\equiv \mathbb{P}\left[Y=(j,k)|Z=z,Y_0=y_0\right]%
\text{.}
\end{equation*}
Consider sets $\left\{u:\Delta u \in (-\infty ,w] \right\}$ and $%
\left\{u:\Delta u \in (w,\infty) \right\}$ for $w\in \mathbb{R}$. There is 
\begin{multline*}
\mathbb{P}[\mathcal{U}^{\ast }(Y,Z;\theta )\subseteq \left\{u:\Delta u \in
(-\infty ,w] \right\}|Z=z,Y_{0}=y_{0}]= \\
p_{10}\left(z,y_0\right)\times 1[-\Delta z\beta +(y_{0}-1)\gamma \leq w]%
\text{,}
\end{multline*}
and 
\begin{multline*}
\mathbb{P}[\mathcal{U}^{\ast }(Y,Z;\theta )\subseteq \left\{u:\Delta u \in
(w,\infty) \right\} |Z=z,Y_{0}=y_{0}]= \\
p_{01}\left(z,y_0\right) \times 1[-\Delta z\beta +y_{0}\gamma > w]\text{.}
\end{multline*}
Thus, using $\mathcal{U}^{\ast}(Y,Z;\theta)\subseteq \mathcal{S} \implies U
\in \mathcal{S}$ we have the inequalities 
\begin{equation*}
p_{10}\left(z,y_0\right)\times 1[-\Delta z\beta +(y_{0}-1)\gamma \leq w]
\leq G_{U|Y_{0}=y_{0}}\left( \left\{ u:\Delta u \leq w \right\} \right)
\end{equation*}
and 
\begin{equation*}
p_{01}\left(z,y_0\right) \times 1[-\Delta z\beta +y_{0}\gamma > w] \leq
G_{U|Y_{0}=y_{0}}\left( \left\{ u:\Delta u > w \right\} \right)\text{.}
\end{equation*}
Let $\tilde{G}_{\Delta U}(w;y_0)\equiv G_{U|Y_{0}=y_{0}}\left( \left\{
u:\Delta u \leq w \right\}\right)$. Since $G_{U|Y_{0}=y_{0}}\left( \left\{
u:\Delta u > w \right\} \right) = 1 - G_{U|Y_{0}=y_{0}}\left( \left\{
u:\Delta u \leq w \right\} \right)$ it follows that 
\begin{equation}  \label{profile inequality binary}
p_{10}\left(z,y_0\right)\times 1[-\Delta z\beta +(y_{0}-1)\gamma \leq w]
\leq \tilde{G}_{\Delta U}(w;y_0) \leq 1 - p_{01}\left(z,y_0\right) \times
1[-\Delta z\beta +y_{0}\gamma > w]\text{.}
\end{equation}
Bounds on $\theta$ are obtained as those values for which the above lower
and upper inequalities never cross, as follows: 
\begin{equation}
\Theta^{\ast} \equiv \left\{\theta \in \Theta: \forall (y_0,w) \in
\{0,1\}\times \mathbb{R} \quad \tilde{G}^L_{\Delta U}(w;y_0) \leq \tilde{G}%
^U_{\Delta U}(w;y_0) \right\}\text{,}
\end{equation}
where $\tilde{G}^L_{\Delta U}(w;y_0)$ and $\tilde{G}^U_{\Delta U}(w;y_0)$
correspond to the lower and upper envelopes for $\tilde{G}_{\Delta U}(w;y_0)$
obtained from (\ref{profile inequality binary}) upon taking intersections
across $z$: 
\begin{equation*}
\tilde{G}^L_{\Delta U}(w;y_0) \equiv \left\{ 
\begin{array}{cc}
\sup\limits_{z \in \mathcal{Z}_L(w,y_0)} p_{10}\left(z,y_0\right) & \text{if}
\quad \mathcal{Z}_L(w,y_0) \neq \emptyset \text{,} \\ 
0 & \text{otherwise.}%
\end{array}
\right.
\end{equation*}
\begin{equation*}
\tilde{G}^U_{\Delta U}(w;y_0) \equiv \left\{ 
\begin{array}{cc}
\inf\limits_{z \in \mathcal{Z}_U(w,y_0)} 1-p_{01}\left(z,y_0\right) & \text{%
if} \quad \mathcal{Z}_U(w,y_0) \neq \emptyset \text{,} \\ 
1 & \text{otherwise.}%
\end{array}
\right.
\end{equation*}
where 
\begin{align*}
\mathcal{Z}_L(w,y_0) &\equiv \left\{z\in\mathcal{R}_Z:-\Delta z\beta
+(y_{0}-1)\gamma \leq w \right\}\text{,} \\
\mathcal{Z}_U(w,y_0) &\equiv \left\{z\in\mathcal{R}_Z:-\Delta z\beta
+y_{0}\gamma > w \right\}{\text.}
\end{align*}
The bounds $\Theta^{\ast}$ correspond to those of Theorem 2 of \cite%
{aristodemou2021semiparametric}, up to minor notational differences.

In a model in which $U$ and $Z$ are independent conditional on $Y_0$ with no
restrictions placed on $G_{U|Y_0 = y_0}$, the set $\Theta^{\ast}$ thus
obtained is in fact the sharp identified set for $\theta $ because for every
value of $\tilde{\theta} \in \Theta^{\ast}$ there is for each $%
y_{0}\in\{0,1\}$ a distribution of $\Delta U$ conditional on $Y_{0}=y_{0}$,
say $\tilde{G}_{\Delta U|Y_{0}=y_0}$, such that $(\tilde{\theta},\tilde{G}%
_{\Delta U|Y_{0}=0},\tilde{G}_{\Delta U|Y_{0}=1})$ is contained in the
identified set of structures.

This is so because $\tilde{G}^L_{\Delta U}(w;y_0)$ and $\tilde{G}^U_{\Delta
U}(w;y_0)$ are nondecreasing functions of $w$ taking values on the unit
interval. Accordingly, for each $\tilde{\theta} \in \Theta^{\ast}$ there
exists for each value of $y_{0}$ a proper distribution $\tilde{G}_{\Delta
U|Y_{0}=y_0}(\cdot ,\tilde{\theta} )$ such that for all $w\in 
\mathbb{R}
$,%
\begin{equation*}
\tilde{G}^L_{\Delta U}(w;y_0) \leq \tilde{G}_{\Delta U|Y_{0}=y_{0}}((-\infty
,w],\tilde\theta )\leq \tilde{G}^U_{\Delta U}(w;y_0)\text{,}
\end{equation*}
for example 
\begin{equation*}
\tilde{G}_{\Delta U|Y_{0}=y_{0}}((-\infty ,w],\theta )=\lambda \tilde{G}%
^L_{\Delta U}(w;y_0) +(1-\lambda )\tilde{G}^U_{\Delta U}(w;y_0)
\end{equation*}%
for any $\lambda \in (0,1)$.\footnote{%
This sharpness result relies on the identified set of structures being
determined by the distribution of a scalar function of the unobserved
variables, in this case $\Delta U$. When $T>2$ this will not be the case in
this example but it is the case in the model of section \ref{Section: 3
period dynamic y0 unobserved}.} Thus any distributions $\tilde{G}_{U|Y_0 =
y_0}(\cdot)$ for $U$ with corresponding distributions $\tilde{G}_{\Delta
U|Y_{0}=y_{0}}(\cdot,\tilde\theta )$ for each $y_0$ paired with $\tilde\theta
$ can produce the observed distributions of $Y$ given $(Y_0,Z)$.
\end{appendices}

\end{document}